\definecolor{myyellow}{RGB}{201,177,12}
\definecolor{myorange}{RGB}{211,126,33}
\definecolor{myred}{RGB}{226,74,74}
\newtheorem{theorem}{Theorem}
\newtheorem{definition}{Definition}
\newtheorem{corollary}{Corollary}
\newtheorem{lemma}{Lemma}
\title{An Improvement of Reed's Treewidth Approximation}
\author{ \href{https://sites.google.com/view/belbasi/home} {Mahdi Belbasi}\\
	Department of Computer Science and Engineering\\
	The Pennsylvania State University\\
	University Park, PA\\
	\href{mailto:belbasi@psu.edu}{\texttt{belbasi@psu.edu}} \\
	\And
	\href{https://www.cse.psu.edu/~fhs/} {Martin F\"urer}\\
	Department of Computer Science and Engineering\\
	The Pennsylvania State University\\
	University Park, PA\\
	\href{mailto:fhs@psu.edu}{\texttt{fhs@psu.edu}} \\
}
\begin{document}
\maketitle

\begin{abstract}
We present a new approximation algorithm for the treewidth problem which finds an upper bound on the treewidth and constructs a corresponding tree decomposition as well. Our algorithm is a faster variation of Reed's classical algorithm. For the benefit of the reader, and to be able to compare these two algorithms, we start with a detailed time analysis of Reed's algorithm. We fill in many details that have been omitted in Reed's paper. Computing tree decompositions parameterized by the treewidth $k$ is fixed parameter tractable (FPT), meaning that there are algorithms running in time $\mathcal{O}(f(k) g(n))$ where $f$ is a computable function, and $g(n)$ is polynomial in $n$, where $n$ is the number of vertices. An analysis of Reed's algorithm shows $f(k) = 2^{\mathcal{O}(k \log k)}$ and $g(n) = n \log n$ for a 5-approximation. Reed simply claims time $\mathcal{O}(n \log n)$ for bounded $k$ for his constant factor approximation algorithm, but the bound of $2^{\Omega(k \log k)} n \log n$ is well known. From a practical point of view, we notice that the time of Reed's algorithm also contains a term of $\mathcal{O}(k^2 2^{24k} n \log n)$, which for small $k$ is much worse than the asymptotically leading term of $2^{\mathcal{O}(k \log k)} n \log n$. We analyze $f(k)$ more precisely, because the purpose of this paper is to improve the running times for all reasonably small values of $k$.

Our algorithm runs in $\mathcal{O}(f(k)n\log{n})$ too, but with a much smaller dependence on $k$. In our case, $f(k) = 2^{\mathcal{O}(k)}$. This algorithm is simple and fast, especially for small values of $k$. We should mention that Bodlaender et al.\ [2016] have an algorithm with a linear dependence on $n$, and Korhonen [2021] obtains the much better approximation ratio of 2, while the current paper achieves a better dependence on $k$.

\end{abstract}


\section{Introduction}
Since the 1970s and early 1980s, when the notions of treewidth and tree decomposition were introduced \cite{bertele1973non, halin1976s, robertson1984graph}, they have played important roles in computer science \cite{bodlaender2005discovering}. In a nutshell, treewidth is a parameter of a graph that measures how similar it is to a tree. One of the main reasons that the tree decomposition is widely studied is that many NP-complete problems have efficient algorithms for graphs with small treewidth. In fact, Courcelle's metatheorem \cite{courcelle1990monadic} states that every graph property definable in monadic second-order logic of graphs can be solved in linear time on graphs of bounded treewidth. The first step of solving such problems is to find an optimal or near-optimal tree decomposition. However, finding an optimal tree decomposition itself is NP-hard \cite{ArnborgCP87}. In this work, we propose an algorithm which is based on Reed's algorithm \cite{reed1992finding} to approximate the treewidth and find an approximately optimal tree decomposition. 

\begin{definition}
A graph problem is \emph{fixed parameter tractable (FPT)} if it can be solved in time $\mathcal{O}\left(f\left(k\right)n^{\mathcal{O}\left(1\right)}\right)$, where $f$ is a computable function, $k$ is a parameter of the graph, and $n$ is the input graph size. 
\end{definition}
\subsection{Previously Known Results}
In this work, we are interested in algorithms which run fast (polynomial in terms of the number of vertices) for graphs with bounded treewidth. One of the first algorithms given for this problem goes back to the same paper where treewidth has been shown to be NP-complete. Arnborg et al.\  \cite{ArnborgCP87} gave an algorithm which runs in time $\mathcal{O}(n^{k+2})$. 
In 1995, Robertson and Seymour \cite{robertson1995graph} gave a quadratic time FPT approximation algorithm. Later on, Lagergren introduced an 8-approximation algorithm with the time complexity of $2^{\mathcal{O}(k\log{k})}n\log^2{n}$ \cite{lagergren1996efficient}.
In 1992, Reed \cite{reed1992finding} improved these algorithms to have an algorithm
running in time $2^{\mathcal{O}\left(k \log k\right)}n\log n$. 
In this paper, we formally show that the approximation ratio of Reed's algorithm is 7 or 5, depending on the frequency of the split by volume. We show that this algorithm runs in time $\mathcal{O}\left(2^{24k}k! n\log{n}\right)$, in order to be able to compare it to our algorithm. 
Like the algorithm of Robertson and Seymour, Reed's algorithm is recursive. In \cite{robertson1995graph}, they find a separator that partitions $G$ into two parts but they do not force the separator to partition the entire graph in a balanced fashion.
Reed finds a separator which partitions the graph in a balanced way to obtain time $\mathcal{O}(n\log{n})$ for bounded $k$. This paper 
focuses on this algorithm.
Later, Bodlaender gave an exact algorithm which runs in $2^{\mathcal{O}(k^3)}n$ \cite{bodlaender1996linear}.
 Even though we focus only on constant-factor approximation algorithms, it is worth mentioning the $(\log k)$-approximation algorithm by Amir~\cite{amir2010approximation} and the $\sqrt{\log k}$-approximation algorithm  by Feige et al.~\cite{FeigeHL2008}.  Later in 2016, Bodlaender et al.\ \cite{bodlaender2016c} gave two constant factor approximation algorithms which run in $2^{\mathcal{O}(k)}\mathcal{O}(n\log{n})$ and $2^{\mathcal{O}(k)}\mathcal{O}(n)$ respectively. The former is a 3-approximation and the latter is af 5-approximation. Although it is a great result from a theoretical point of view, it uses a sophisticated data structure and the constant factor hidden in $\mathcal{O}(k)$ in the exponent is not claimed to be practical. Very recently, Korhonen gave a 2-approximation algorithm for the same problem running in time $2^{\mathcal{O}(k)}n$\cite{korhonen2021single}. He first provides a loose upper bound on the hidden coefficient of $k$ in the exponent. Then he improves it to $10.7549$ by decreasing his potential function to what seems to be a natural barrier in the worst-case scenario. Here, we sacrifice the linear dependence on $n$, but drop the coefficient of $k$ in the exponent to only $7.61$. Table~\ref{tab:history} summarizes the history of previously known algorithms for treewidth problem.
 
\begin{table}[]\label{tab:history}
\resizebox{\columnwidth}{!}{\begin{tabular}{|l|c|cc|c|}
\hline
\multicolumn{1}{|c|}{Reference} & \begin{tabular}[c]{@{}c@{}}Approximation\\ Ratio\end{tabular} & \multicolumn{1}{c|}{\begin{tabular}[c]{@{}c@{}}Dependence\\ on $k$\end{tabular}} & \begin{tabular}[c]{@{}c@{}}Dependence\\ on $n$\end{tabular} & Comments                                                                                                                   \\ \hline
Arnborg et al.\ (1987)~\cite{arnborg1988problems} & 1   &\multicolumn{2}{c|}{$\mathcal{O}\left(n^{k+2}\right)$}  & not FPT \\ \hline
Robertson \& Seymour (1995)~\cite{robertson1995graph,reed_1997}
& 4                                                             & \multicolumn{1}{c|}{$\mathcal{O}\left(3^{3k}\right)$}  & $n^2$ & \\ \hline
Lagergren (1996)~\cite{lagergren1996efficient} & 8 & \multicolumn{1}{c|}{$2^{\mathcal{O}\left(k \log k\right)}$} & 
$n \log^2n$&\\ \hline
Reed (1992)~\cite{reed1992finding}& 5 (or 7)& \multicolumn{1}{c|}{$2^{24k}k!$ (Sec.~\ref{sec:analysisReed})}&$\mathcal{O}\left(n \log n\right)$&    \\ \hline
Bodlaender (1996)~\cite{bodlaender1996linear} & 1 & \multicolumn{1}{c|}{$2^{\mathcal{O}\left(k^3\right)}$}   & $n$ &\\ \hline
Feige et al.\ (2008)    &  $\mathcal{O}(\sqrt{\log k})$    & \multicolumn{1}{c|}{$\mathcal{O}(1)$}   &  $n^{\mathcal{O}(1)}$      & \begin{tabular}[c]{@{}c@{}}not a constant-\\ factor approxi-\\ mation\end{tabular}                                         \\ \hline

\multirow{4}{*}{Amir (2010)~\cite{amir2010approximation}} &$\mathcal{O}( \log k)$& \multicolumn{1}{c|}{$\mathcal{O}(k \log k)$} &  $n^4$ & \begin{tabular}[c]{@{}c@{}}not a constant-\\ factor approxi-\\ mation\end{tabular}                                         \\  \cline{2-5} 
     &     $4$   & \multicolumn{1}{c|}{$\mathcal{O}\left(2^{4.38k}k\right)$ }    &    $n^2$  & \multirow{2}{*}{}          \\ \cline{2-4}& $4.5$ & \multicolumn{1}{c|}{$\mathcal{O}\left(2^{3k}k^{1.5}\right)$ }  & $n^2$ &\\
     \cline{2-4}& $\frac{11}{3}$ & \multicolumn{1}{c|}{$\mathcal{O}\left(2^{3.6982k}k^{3}\right)$ }  & $n^3\log^4 n$ &
     \\ \hline
Fomin et al. (2015)~\cite{fomin2015large}& 1 & \multicolumn{1}{c|}{$\mathcal{O}(1)$}& $1.7347^n$ & not FPT\\ \hline
\multirow{2}{*}{Bodlaender et al. (2016)~\cite{bodlaender2016c}}     &  $3$    & \multicolumn{1}{c|}{$2^{\mathcal{O}(k)}$} & $n\log n$  &   the coefficients of  \\ \cline{2-4}    & $5$ & \multicolumn{1}{c|}{$2^{\mathcal{O}(k)}$}& $n$ & $k$ are not stated\\ \hline
Belbasi \& F\"urer (2021)~\cite{belbasi2021improvement}& $5$ & \multicolumn{1}{c|}{$2^{8.766k}$}  & $n \log n$&\\ \hline
Korhonen (2021)~\cite{korhonen2021single}&  $2$ & \multicolumn{1}{c|}{$2^{10.7549k}$}    & $n$ & \begin{tabular}[c]{@{}c@{}}relatively low\\coefficient of $k$\\ in the exponent,\\good approxi-\\ mation ratio, and\\ only linear\end{tabular} \\ \hline
This paper&5 & \multicolumn{1}{c|}{$2^{7.61k}$} & $n \log n$ &  \begin{tabular}[c]{@{}c@{}}extra $\log n$ \\in the running\\time compared \\  to~\cite{korhonen2021single}\\but has smaller\\ coefficient of $k$\\ in the exponent\end{tabular}\\ \hline
\end{tabular}}\caption{The history of previous algorithms for the treewidth approximation problem}
\end{table}

\subsection{Our Contribution}
First in Section~\ref{sec:analysisReed}, we analyze Reed's algorithm \cite{reed1992finding} in detail. Reed has focused on the dependence on $n$ because his goal was to come up with an  $\mathcal{O}(n\log n)$-time algorithm, for fixed $k$. The ``fastest''\footnote{Considering the dependence on both $n$ and $k$.} algorithm at that time, was Lagergren's $\mathcal{O}(n \log^2 n)$-time algorithm \cite{lagergren1996efficient}. We show that the dependence on $k$ in Reed's algorithm is $2^{\mathcal{O}(k\log{k})}$. Furthermore, we give a proof for the approximation ratio of Reed's algorithm by filling in the details.

Then, we propose two improvements and prove that the approximation ratio stays at $5$. One of our improvements focuses on the notion of a ``balanced split''. We call a split balanced, if we get two parts of volume $1-\epsilon$ and $\epsilon$ (or better). Then, the running time of our algorithm has a factor of $1/\epsilon$. For instance, if we set $\epsilon = \frac{1}{100}$, a generous estimation shows that the dependence on $k$ in our $\mathcal{O}(f(k)n\log{n})$-time algorithm is $k^2\, 2^{8.87k}$, instead of $2^{24k}(k+1)!$ in Reed's algorithm.  Here the asymptotic notation is a bit misleading from a practical point of view, as $2^{24k} = o(k!)$, even though $k!$ is reasonable for small $k$, while $2^{24k}$ is not. Then, we further improve the running time upper bound and show that our algorithm runs in time $\mathcal{O}(2^{7.61k}k^2 n\log n)$. In the end, the main aim of this paper is to produce a constant-factor approximation algorithm that runs in time $2^{ck}n\log{n}$ with $c$ as small as possible.


\section{Preliminaries}
\subsection{Tree Decomposition}
\begin{definition}
	A \emph{tree decomposition} of a graph $G = (V, E)$, is a tree $\mathcal{T} = (V_{\mathcal{T}}, E_{\mathcal{T}})$ such that each node $x$ in $V_{\mathcal{T}}$ is associated with a set $B_x$ (called the bag of $x$) of vertices in $G$, and such that $\mathcal{T}$ has the following properties:
	\begin{itemize}
		\item The union of all bags is equal to $V.$ In other words, for each $v\in V,$ there exists at least one node $x\in V_{\mathcal{T}}$ with $B_x$ containing $v$.
		
		\item For every edge $\{u, v\} \in E$, there exists a node $x$ such that $u,v\in B_x.$ 
		
		\item For any nodes $x, y \in V_{\mathcal{T}}$, and any node $z\in V_{\mathcal{T}}$ belonging to the path connecting $x$ and $y$ in $\mathcal{T}$, $B_x \cap B_y \subseteq B_z.$
	\end{itemize}
\end{definition}

 In this paper we use a variation of tree decomposition where the adjacent bags differ in at most one vertex (converting can happen in linear time). 

The \emph{width of a tree decomposition} is the size of a largest bag minus one. The \textit{treewidth} of a graph $G$ is the minimum width over all tree decompositions of $G$ called $tw(G)$. Observe that the treewidth of a tree is 1.
 In the following, we reserve the letter $k$ for the treewidth$+1$.

We have to mention that Bodlaender et al.\ \cite{bodlaender2016c} filled in some details on Reed's algorithm. We need to be more detailed because we do not use Reed's algorithm as a black box. That is why we first analyze Reed's algorithm precisely (Section~\ref{sec:analysisReed}), fill in the blanks, and then introduce our improvements of his algorithm (Section \ref{sec:ours}).

\section{Analysis of Reed's Algorithm}\label{sec:analysisReed}
In 1992, Reed gave an elegant algorithm \cite{reed1992finding} to either construct a tree decomposition of width less than $7k$ or $5k$ of a given graph $G$, or declare that the treewidth is at least $k$ and output a subgraph which is a bottleneck (no separator of size $\leq k$). 

\begin{figure}
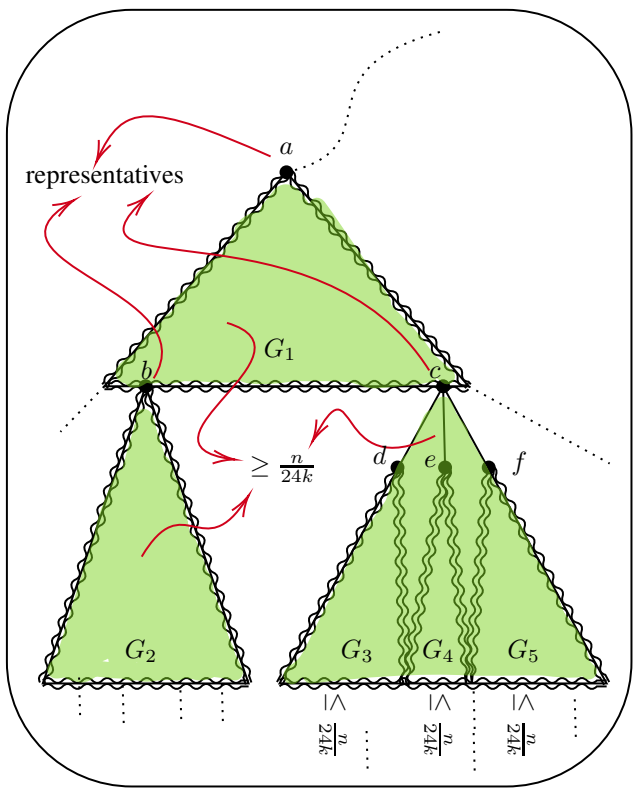

	
	\begin{minipage}[c]{0.70\textwidth}
	\tikzset{every picture/.style={line width=0.75pt}} 
	

	\end{minipage}\hfill
	\begin{minipage}[c]{0.40\textwidth}
	\caption{
		For a given graph $G$, let $V(G)$ be the set of its vertices. Let $G_6 = G[V(G_3) \cup V(G_4) \cup V(G_5) \cup \{c\}]$ (the entire rightmost green subtree rooted at $c$), and assume $|V(G_1)|, |V(G_2)|, \text{ and } |V(G_6)| \geq \frac{n}{24 k}$, and $|V(G_3)|, |V(G_4)|, \text{ and } |V(G_5)| < \frac{n}{24k}$. Here, $d, e,$ and $f$ are NOT representatives but $a, b,$ and $c$ are.
		}\label{fig:reps}
	\end{minipage}
\end{figure}

\subsection{Summary of Reed's Algorithm}
In Reed's algorithm, one of the main tasks is to find a ``balanced'' separator $S$ that splits the graph $G - S$ into two subgraphs with  sets of vertices $X, Y \subseteq V(G)$. Once a balanced separator is found, the algorithm recursively finds a tree decomposition for $G[X\cup S]$ (the subgraph induced by $X \cup S$) and $G[Y \cup S]$.

The main task is to find a balanced separator. Instead of branching on every vertex (going to $X, Y,$ or $S$, which would be exponential in $n$), Reed forms groups of vertices and works with the representatives of the groups. Then, he branches on the representatives. 

Reed does a DFS and finds the deepest vertex $v$ whose subtree has at least $\frac{n}{24k}$ vertices\footnote{Later, we talk about this threshold.} (See Figure \ref{fig:reps}). We call such a vertex a ``representative''. He defines the weight of the representative $v$ as the size of its subtree rooted at $v$ in the DFS, denoted by $w(v)$. The idea here, is that if some representatives with a large total weight go to either $X$ or $Y$, then most of their descendants will go to the same set. The reason is that if a descendant goes to the other side, the path connecting the representative to the descendant has at least one vertex in the separator. However, we know that the separator cannot have more than $k$ vertices. So, not many vertices will go to the wrong set; not more than $\frac{n}{24}$ vertices, in total. This is because every subtree that partially goes to the other side should go through the separator and have one vertex there. Hence, not more than $k$ subtrees rooted at children of the representatives can go through the separator, which results in at most $\frac{n}{24}$ vertices on the wrong side. This nice property allows Reed to work with the set of representatives (which is much smaller) rather than all the vertices (see Figure~\ref{fig:error}).

\begin{figure}
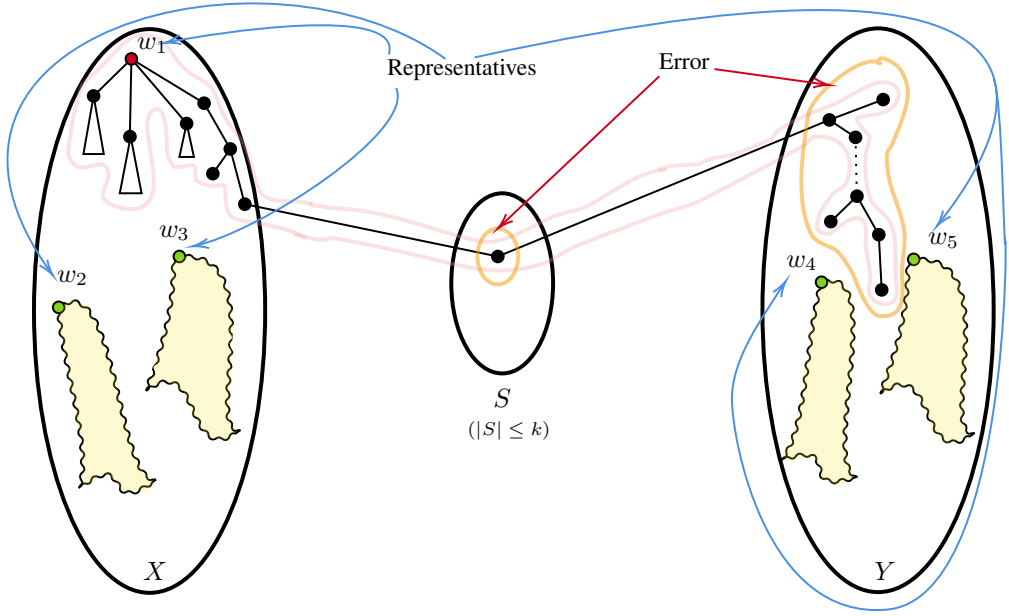

    \centering
    \makebox[\textwidth][c]{

\tikzset{every picture/.style={line width=0.75pt}} 


}
    \caption{Error, i.e. part of a subtree not on the side of its representative}
    \label{fig:error}
\end{figure}
Now, one might think, why not just check all the possibilities of the representatives going to $X, Y,$ or $S$? The reason that this simple idea does not  work is that if a representative goes to the separator, its entire substree of arbitrary size can go anywhere and we do not have any control over it. Reed handles this problem by deciding whether any representative is going to the separator, at the very beginning of the algorithm. If so, he just places such a representative (namely $v$) into $S$ (and not its subtree) and starts forming a new group of representatives by running a new DFS on $G - \{v\}$. So, the other representatives might change. Also, since one vertex has been placed into the separator, now $k \leftarrow k-1$. However, if none of the representatives goes to the separator, he branches on placing them left ($X$) or right ($Y$). This is the high-level idea of Reed's algorithm.

Let's start with presenting and reviewing some definitions.


%
%


\subsection{Centroids and Separators}
For an undirected graph $G = (V, E)$ and a subset $W$ of the vertices, $G[W]$ is the subgraph induced by $W$. For the sake of simplicity throughout this paper, let $G - W$ be $G[V\setminus W]$ and $G - v$ be $G - \{v\}$ for any $W \subseteq V(G)$ and any $v \in V(G)$.

Also, in a weighted graph, a non-negative integer weight $w(v)$ is defined for each vertex $v$. For a subset $W$ of the vertices, the weight $w(W)$ is simply the sum of the weights of all vertices in $W$. Furthermore, the total weight or the weight of $G$ is the weight of $V$.

\begin{definition}
A centroid of a weighted tree $T$ is a node $x$ such that none of the trees in the forest $T-x$ has more than half the total weight.
\end{definition}
\begin{definition}
A tree decomposition is called a \emph{good tree decomposition} if the adjacent bags differ in at most one vertex.
\end{definition}
For good tree decompositions we choose a stronger version of centroid.

\begin{lemma}
	Any tree decomposition $\mathcal{T}$ with width $k$ can be converted in linear time to a good tree decomposition $\mathcal{T}^*$ of the same graph without increasing the width. 
\end{lemma}
\begin{proof}
    Bodlaender \cite{bodlaender1996linear} has defined the more restrictive notion of a smooth tree decomposition and shown that it can be computed in linear time. However, we include the derivation for our notion to make it self-contained\footnote{Alternatively, one can refer to the derivation of a similar result for nice tree decomposition~\cite{kloks1994treewidth}.}.
    
	Let $x$ and $y$ be two adjacent nodes of $\mathcal{T}$ with associated bags $B_x$, and $B_y$, respectively.
	
	Let $B_x = \{v_1, \dots, v_t\}$, and $B_y = \{u_1, \dots, u_s\}$. Notice that $s,t \leq k$.
	
	Now, let $V^- = B_x\setminus B_y$, and $U^+ = B_y \setminus B_x$. Let $D = V^- \cup U^+ = \{d_1, \dots, d_{|V^-|}, d'_1, \dots, d'_{|U^+|}\}$. 
	
	We just add $|D|$ intermediate nodes (namely $x_1, x_2, \dots, x_{|V^-|}, y_1, y_2, \dots, y_{|U^+|}$) between $x$ and $y$ such that we start by deleting vertices of $V^-$ one by one and then adding vertices of $U^+$ one at a time. 
	
	Set 
	\begin{align}
		\begin{cases}
			B_{x_1} = B_x \setminus \{d_1\} &\\
			B_{x_{i}} = B_{x_{i-1}} \setminus \{d_i\} & \forall i \in \{2, \dots, |V^-|\}\\
			B_{y_{1}} = B_{x_{|V^-|}} \cup \{d'_1\} &\\
			B_{y_{i}} = B_{y_{i-1}} \cup \{d'_i\} & \forall i \in \{2, \dots, |U^+|\}.\\
		\end{cases}
	\end{align} 
\end{proof}
\begin{definition}
A strong centroid of a good tree decomposition $\tau$ of a graph $G = (V, E)$ with respect to $W \subseteq V$ is a node $x$ of $\tau$ such that none of the connected components of $G - B_x$ contains more than $\frac{1}{2}|W\setminus B_x|$ vertices of $W$.
\end{definition}
The following lemma shows the existence of a strong centroid for any given subset $W$ of $V$.
\begin{lemma}\label{lem:cent}
For every good tree decomposition $(\mathcal{T}, \{B_x: x\in V_{\mathcal{T}}\})$ of a graph $G = (V, E)$ and every subset $W \subseteq V$, there exist a strong centroid with respect to $W$\footnote{Flum-Grohe~\cite{Flum:2006:PCT:1121738} use the property of a standard centroid for tree decompositions in Lemma 11.16 page 267. They talk about balanced $W$-separators and do not use the term centroid. They only use the standard notion of a centroid, showing  that no connected component contains more than $\frac{|W|}{2}$ of the vertices. Strong centroids show that no connected component has size more than $\frac{|W\setminus S|}{2}$.}.
\end{lemma}

\begin{proof} 
	If a node $x$ is not a strong centroid with respect to $W$, then let $C_x$ be the set of vertices in the unique connected component of $G-B_x$ containing more than $\frac{1}{2}|W \setminus B_x|$ vertices of $X$. In the forest obtained from the tree $\mathcal{T}$ by removing $x$, there is a tree $\mathcal{T}_x$ with the property that the union of all bags in $\mathcal{T}_x$ contains all the vertices of $C_x$.
	
	Now, we define a set $F$ of directed tree edges by $(x,y) \in F$ if all the following conditions hold:
	\begin{itemize}
		\item $x$ is not a strong centroid
		\item $y$ is a neighbor of $x$ in $\mathcal{T}$,
		\item $y$ is a node in $\mathcal{T}_x$.
	\end{itemize}
	Now we show that there is a node $x$ with out-degree 0 in $(V_{\mathcal{T}}, F)$. Such an $x$ is a centroid, and we are done. Otherwise, $F$ contains $(x,y)$ and $(y,x)$ for some $x,y \in V_{\mathcal{T}}$. W.l.o.g., $B_y = B_x\cup\{v\}$ for some $v\in V\setminus B_x$. Note that $\mathcal{T}_x$ and $\mathcal{T}_y$ are disjoint. Furthermore, any vertex that is in a bag of $\mathcal{T}_x$ and in a bag of $\mathcal{T}_y$ is also in $B_x$ and $B_y$. Thus also $C_x$ and $C_y$ are disjoint.
	
	Furthermore, $W \cap C_x \subseteq W\setminus B_x$ and $W \cap C_y \subseteq W\setminus B_y \subseteq W\setminus B_x$. As $W \cap C_x$ and $W\cap C_y$ are disjoint, they cannot both have more than $\frac{1}{2}|W\setminus B_x|$ vertices, which is a contradiction.
	
	Hence, there exists a node $x$ which is a strong centroid.
\end{proof}


We use the definitions of balanced $W$-separator and weakly balanced $W$ separation from the book of Flum and Grohe \cite{Flum:2006:PCT:1121738}.

\begin{definition} Let $G = (V,E)$ be a graph and $W \subseteq V$. A \emph{balanced $W$-separator} is a set $S \subseteq V$ such that every connected component of $G-S$ has at most $\frac{1}{2}|W|$ vertices.
\end{definition}

\begin{lemma}\label{Lem:ResBalSep}{\rm \cite[Lemma 11.16]{Flum:2006:PCT:1121738}}
Let $G = (V,E)$ be a graph of treewidth at most $k-1$ and $W \subseteq V$. Then there exists a balanced $W$-separator of $G$ of size at most~$k$.
\end{lemma}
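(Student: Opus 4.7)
The plan is to use Lemma~\ref{lem:cent} essentially as a black box. Since $G$ has treewidth at most $k-1$, it has a tree decomposition of width at most $k-1$, which can be converted (by subdividing edges, inserting introduce/forget/join nodes) into a \emph{nice} tree decomposition $(\mathcal{T}, \{B_x : x \in V_\mathcal{T}\})$ of the same width. In particular, every bag $B_x$ has $|B_x| \le k$.

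Next, I would invoke Lemma~\ref{lem:cent} on this nice tree decomposition with the given set $W \subseteq V$ to obtain a strong centroid $x^\ast$ with respect to $W$. Define the candidate separator
\[
S := B_{x^\ast}.
\]
Then $|S| \le k$ by the width bound, which gives the size guarantee.

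It remains to verify that $S$ is a balanced $W$-separator. Let $C$ be any connected component of $G - S = G - B_{x^\ast}$. By the defining property of a strong centroid,
\[
|W \cap C| \;\le\; \tfrac{1}{2}\,|W \setminus B_{x^\ast}| \;\le\; \tfrac{1}{2}\,|W|,
\]
which is exactly the balanced-separator condition. Hence $S$ is a balanced $W$-separator of size at most $k$, completing the proof.

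The only place where some care is required is the reduction from an arbitrary tree decomposition to a nice one without increasing the width; this is standard and does not constitute a real obstacle. Essentially all the work has already been done in Lemma~\ref{lem:cent}, so this proof is a short consequence of it.
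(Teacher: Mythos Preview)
Your argument is correct: taking a nice tree decomposition of width $k-1$, applying Lemma~\ref{lem:cent} to obtain a strong centroid $x^\ast$, and setting $S=B_{x^\ast}$ immediately yields a balanced $W$-separator of size at most $k$, since each component of $G-B_{x^\ast}$ contains at most $\tfrac{1}{2}|W\setminus B_{x^\ast}|\le \tfrac{1}{2}|W|$ vertices of $W$.

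As for comparison: the paper does not give its own proof of this lemma; it simply cites Flum and Grohe \cite[Lemma~11.16]{Flum:2006:PCT:1121738}. The textbook proof there is essentially the same idea---one walks in the tree decomposition toward the ``heavy'' side until a bag is found whose removal leaves no component with more than half of $W$. Your derivation packages this neatly by invoking Lemma~\ref{lem:cent}, which the paper \emph{does} prove in full; so you have correctly recognized that Lemma~\ref{Lem:ResBalSep} is an immediate corollary of the strong-centroid lemma already established in the paper.
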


We say that a separator $S$ separates $X \subseteq V$ from $Y \subseteq V$ if $C \cap X = \emptyset$ or $C \cap Y = \emptyset$ for every connected component $C$ of $G - S$.
\begin{definition}
Let $G = (V,E)$ be a graph and $W \subseteq V$. A \emph{weakly balanced separation} of $W$ is a triple $(X, S, Y )$, where $X, Y \subseteq W$, $S \subseteq V$ are pairwise disjoint sets such that:
\begin{itemize}
\item $|S|\leq k$
\item $W = X \cup (S \cap W) \cup Y$
\item $S$ separates $X$ from $Y$
\item $0 < |X|, |Y| \leq \frac{2}{3}|W|$.
\end{itemize}
\end{definition}

\begin{lemma}\label{lem:exs_weak}{\rm \cite[Lemma 11.19]{Flum:2006:PCT:1121738}}
For $k \geq 3$, let $G = (V, E)$ be a graph of treewidth at most $k-1$ and $W \subseteq V$ with $|W| \geq 2k+1$. Then there exists a weakly balanced separation of $W$ of size at most $k$.
\end{lemma}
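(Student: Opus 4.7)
The plan is to use Lemma~\ref{lem:cent} to construct both the separator and the partition. First I would take any tree decomposition of $G$ of width at most $k-1$ (which exists by hypothesis) and transform it into a nice tree decomposition of the same width via the standard construction. Applying Lemma~\ref{lem:cent} to this nice decomposition and the set $W$ yields a strong centroid $x$, and I set $S := B_x$, so $|S| \leq k$. The strong centroid property then gives $|W \cap C| \leq \tfrac{1}{2}|W \setminus B_x|$ for every connected component $C$ of $G - S$.

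Let $C_1, \ldots, C_m$ denote the components of $G - S$ that meet $W$, write $n_i = |W \cap C_i|$, and set $N := |W \setminus S| = \sum_i n_i$. The hypothesis $|W| \geq 2k+1 > |S|$ forces $N > 0$, and combined with $n_i \leq N/2$ this also forces $m \geq 2$. What remains is to split $\{C_1, \ldots, C_m\}$ into two nonempty groups whose $W$-sums are at most $\tfrac{2}{3}|W|$, and then to define $X$ and $Y$ as the unions of $W \cap C_i$ over the respective groups. With this in hand, the three defining conditions of a weakly balanced separation are immediate: $X$, $Y$, $S$ are pairwise disjoint because the $C_i$ are components of $G - S$; $W = X \cup (S \cap W) \cup Y$ because every $W$-vertex lies in $B_x$ or in exactly one $C_i$; and $S$ separates $X$ from $Y$ because the two sides live in disjoint unions of components of $G - S$.

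The main obstacle, and the step I would handle carefully, is the partitioning itself. Sorting so that $n_1 \geq n_2 \geq \cdots \geq n_m$, I would split on whether $n_1 \geq N/3$. If so, placing $C_1$ alone on one side yields sizes $n_1 \in [N/3, N/2]$ and $N - n_1 \in [N/2, 2N/3]$, both positive because $m \geq 2$. Otherwise every $n_i$ is strictly below $N/3$, and I would grow one side greedily by adding components as long as the running sum is below $N/3$; the moment it crosses $N/3$ it remains below $2N/3$ since the last added $n_i$ is itself below $N/3$, and the complement has sum in $(N/3, 2N/3]$, nonempty because $n_m < N/3$ prevents the process from consuming every component. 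Since $N \leq |W|$, both sides end up with $W$-size at most $\tfrac{2}{3}|W|$, and combined with $|S| \leq k$ this yields the desired weakly balanced separation.
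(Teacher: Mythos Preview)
Your argument is correct. Note, however, that the paper does not supply its own proof of this lemma; it simply cites Flum and Grohe, so there is no paper proof to compare against directly.

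That said, your proof is essentially the standard one, with one cosmetic difference worth flagging: you obtain the separator $S$ via the paper's Lemma~\ref{lem:cent} (strong centroid), whereas the textbook route goes through Lemma~\ref{Lem:ResBalSep} (balanced $W$-separator). Your choice yields the slightly sharper bound $n_i \leq \tfrac{1}{2}|W \setminus S|$ rather than $n_i \leq \tfrac{1}{2}|W|$, which makes the greedy partitioning step a touch cleaner, but either bound suffices. The remainder of the argument---sorting the $n_i$, splitting on whether $n_1 \geq N/3$, and in the small case growing one side until it first crosses $N/3$---is exactly the standard packing argument, and your verification that both sides are nonempty and bounded by $2N/3 \leq \tfrac{2}{3}|W|$ is sound.
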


Even though Lemma~\ref{lem:exs_weak} is sufficient for us, one can make it stronger such that it holds for $|W|\geq k+1$.

\begin{theorem}{\rm \cite[Corollary 11.22]{Flum:2006:PCT:1121738}}
For a graph of treewidth at most $k-1$ with a given set $W \subseteq V$ of size $|W| = 3k-2$, a weakly balanced separation of $W$ of size $\leq k$ can be found in time $O(3^{3k} k^2 n)$.
\end{theorem}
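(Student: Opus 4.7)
The plan is to combine Lemma~\ref{lem:exs_weak} with a brute-force enumeration over the $2^{|W|} = 2^{3k-2}$ bipartitions of $W$. For each bipartition $W = L \cup R$ (disjoint), I would compute a minimum vertex separator of $L$ from $R$ in $G$ via max-flow on a vertex-split network, then return the triple $(X, S, Y) := (L \setminus S,\, S,\, R \setminus S)$ as soon as it satisfies the size constraints of a weakly balanced separation.

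For the max-flow step I would use the standard reduction: every vertex $v \in V$ is split into $v_{\mathrm{in}} \to v_{\mathrm{out}}$ with unit capacity, each original edge $uv$ becomes an infinite-capacity pair of arcs, and Ford--Fulkerson is run with $L$ as sources and $R$ as sinks. Splitting every vertex (not just those outside $L \cup R$) is essential, since the eventual separator $S$ typically contains vertices of $W \cap S$. I would halt after at most $k+1$ augmenting paths, each taking $O(n+m)$ time; since $\mathrm{tw}(G) \leq k-1$ gives $m = O(kn)$, the per-bipartition cost is $O(k^2 n)$, for a total of $O(2^{3k-2} \cdot k^2 n) = O(2^{3k} k^2 n)$.

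Correctness rests on Lemma~\ref{lem:exs_weak}, which guarantees a weakly balanced separation $(X^*, S^*, Y^*)$ with $|S^*| \leq k$ (applicable because $|W| = 3k-2 \geq 2k+1$ for $k \geq 3$). I would argue that among the enumerated bipartitions, at least one places $X^* \subseteq L$ and $Y^* \subseteq R$, with the vertices of $W \cap S^*$ distributed so that $|L|, |R| \leq \tfrac{2}{3}|W|$; for that bipartition, $S^*$ is itself a valid $L$--$R$ vertex separator (since vertices of $W \cap S^* \subseteq L \cup R$ are absorbed into the cut via the vertex-splitting gadget), so the algorithm's min cut $S$ satisfies $|S| \leq k$.

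The main obstacle is extracting a triple that satisfies the weak-balance constraints exactly: the algorithm's $S$ need not coincide with $S^*$, so one must verify $0 < |X|, |Y| \leq \tfrac{2}{3}|W|$ a posteriori. This reduces to a short case analysis on $|X^*|, |Y^*|$, and $|W \cap S^*|$, using the identity $|X^*| + |Y^*| + |W \cap S^*| = |W| = 3k-2$ and the bound $|S| \leq k$ to control integer slack at the boundary. By choosing how to split $W \cap S^*$ between $L$ and $R$, the enumeration is guaranteed to contain a bipartition for which the resulting $(X, S, Y)$ fulfils every required inequality; I expect this accounting, rather than the running-time bookkeeping, to be the trickiest part of the write-up.
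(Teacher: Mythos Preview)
The paper does not supply its own proof of this statement; it is quoted verbatim as Corollary~11.22 of Flum and Grohe and left at that. The only further hint is the later pointer (Section~\ref{subsec:algs}) to the \texttt{Split} procedure of Lemma~11.20 in the same book, which is exactly the bounded max-flow routine you describe. Your proposal is therefore the standard argument behind the cited result, and there is nothing in the present paper to compare it against beyond the citation itself.

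Your outline is correct. The one place that genuinely needs care is precisely the spot you flag: after replacing $S^*$ by the algorithm's minimum cut $S$, the lower bounds $|L\setminus S|>0$ and $|R\setminus S|>0$ are not automatic, since $|S|\le k$ while $|L|$ or $|R|$ can be as small as $k$ --- indeed $S=L$ is always a feasible cut whenever $|L|\le k$. Making the existence argument go through requires either choosing the witnessing bipartition so that $|L|,|R|\ge k+1$, or (when $|W\cap S^*|$ is too small to allow that) arguing directly that any size-$\le k$ cut for that bipartition must leave both sides nonempty. This is routine but not vacuous, so your instinct that the integer accounting is the real content of the write-up is right.
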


\subsection{Algorithm to Find a Weakly Balanced Separation}\label{subsec:algs}
Separation$(G, k)$ is the main part of Reed's algorithm. It finds a separator of size at most $k$ in $G$ using the procedures Split$(G, X, Y, k)$ and DFS-Trees$(G, k)$. We explain each of these procedures.
\vspace*{-2ex}
\subsubsection{Split$(G, X, Y, k)$}
For $X$, $Y$ disjoint subsets of $V$, Split$(G, X, Y, k)$ finds a separator $S$ of size at most $k$ in $G$ which is strictly between $X$ and $Y$. Split reports failure if no such separator exits (described in Lemma 11.20 of \cite{Flum:2006:PCT:1121738}).
\vspace*{-2ex}
\subsubsection{DFS-Trees($G, k$)}
DFS-Trees$(G, k)$ (Algorithm 1 using Algorithm 2)
computes a DFS tree and partitions it into smaller DFS trees with the following properties.
\begin{itemize}
\item the size (number of vertices) of the smaller trees is at least $s = n/24k$, and
\item all subtrees rooted at children of the roots of the trees in the partition have size less than $n/24k$.
\end{itemize}
DFS-Trees$(G, k)$ collects the set $W'$ consisting of all the roots of the trees in the partition.
These roots are representatives of the vertices in their small DFS tree. 
Therefore, the weight $w[v]$ for $v \in W'$ is the number of vertices in the small tree with root~$v$.
\vspace*{-2ex}

\subsubsection{Separation($G, k$)}
Separation($G, k$) is the recursive procedure that splits according to the number of vertices (Algorithm \ref{alg:separation}). Note that when any vertex $v$ is placed into the separator $S$, then the procedure Separation removes that vertex $v$ from the graph and starts from scratch. The idea is that when we place a root of a small tree (a representative) left or right, then we want to put the whole small tree there. But when a representative is placed into the separator, then its tree does not go there. At this point a new collection of trees is formed.

\begin{algorithm}\label{alg:DFS}
	\DontPrintSemicolon
	\SetAlgoLined
	\KwResult{Roots of DFS-Trees  (representatives) with sizes of their strict subtrees $< |V| / (24 k)$}
	\textbf{Procedure DFS-Trees($G, k$)} \tcp*{$G$ is a connected graph.}
	$_{^*}s = \frac{|V|}{24k}$ \tcp*{$s:$ the size bound for splitting off a small tree.}
	$_{^*}W' = \emptyset$\;
	$_{^*}$\For{all $v \in V$}{
		$_{^*}$color$[v] = $ WHITE\; 
	}
	$_{^*}$Pick any vertex $u$ of $G$.\;
	$_{^*}$count = DFS-visit($G, u$)\;
	$_{^*}w[x] = w[x] + count$, where $x$ is the last representative added to $W'$\;
	$_{^*}$\Return{$(W'$,  $w[v]$ for all $v \in W')$}\;
	\textbf{End Procedure}
	\caption{Construct Small DFS-Trees}
	\label{alg:DFS}
\end{algorithm}

\begin{algorithm}
	\DontPrintSemicolon
	\SetAlgoLined
	\textbf{Procedure DFS-Visit($G, u$)} \\
	$_{^*}$ color[$u$] = GRAY\;
	$_{^*}$ count = 1\;
	$_{^*}$\For{all $v$ adjacent to $u$}{
		$_{^*}$\If{color$[v] == $ WHITE \tcp*{The white vertex $v$ is discovered now.}}{ 
			$_{^*}$ $count = count + \text{DFS-Visit($G, v$})$\;
		} 
	}
	$_{^*}$\If{$count \geq s$}{
		$_{^*}$ $W' = W' \cup \{u\}$\;
		$_{^*}$ $w[u] = count$\;
		$_{^*}$ $count=0$\;
	}
	\Return{count}
	
	\textbf{End Procedure}
	\caption{Main recursive procedure of DFS-Trees}
\end{algorithm}

\begin{algorithm}
	\DontPrintSemicolon
	\SetAlgoLined
	\KwResult{A  weakly balanced separation $(X, S, Y)$ of $G$ of size $\leq k$}
	\textbf{Procedure SEPARATION($G, k$)} \\
	$_{^*}$\If{$G$ is not a connected graph}{
	$_{^*}$Let $C_1, \dots, C_t$ be the connected components of $G$, and w.l.o.g., assume $C_1$ is the largest component.\\
	$_{^*}$\If{$|C_1|< \frac{3}{4}|V|$}{
	$_{^*}$Let $L = C_1$ and $i=2$.\\
	$_{^*}$\While{$|L| < \frac{1}{4}|V|$}{
	$_{^*} L = L\cup C_i$.\\
	$_{^*} i = i+1$ 
	}
	$_{^*}$\Return{$(L, \emptyset, V\setminus L)$}
	}
	$_{^*}$Let $(X',S',Y') = $SEPARATION$(G[C_1], k)$. W.l.o.g., assume $X'$ is the one with lower weight.\\
	$_{^*}$\Return{$(X'\cup (\bigcup\limits_{i=2}^t C_i)), S', Y')$}
	}
	$_{^*}$\If{$k >0$}{$_{^*} (W', w[v]: \forall v \in W') =$ DFS-Trees($G, k$)\;}
	$_{^*}$\For{all $v \in W'$ \tcp*{Here $v$ is placed into separator $S$.}}{
		$_{^*}(X, S, Y) = $ SEPARATION($G - v, k - 1$)\;
		$_{^*}$\If{$\neg$failure}{
			$_{^*}$\Return{$(X, S\cup\{v\}, Y)$}} 
	}
	\tcp*{The set of vertices $W'$ is partitioned into $X\subseteq L$ and $Y \subseteq R = W' \setminus L$.}
	$_{^*}$\For{all $X\subseteq W'$ \tcp*{Here no vertex is put into $S$.}}{
		$_{^*}$\If{$(\frac{1}{3} - \frac{1}{24})|V| \leq w(X) \leq (\frac{2}{3} + \frac{1}{24})|V|$}{
			$_{^*}$Split($G, X, W'\setminus X, k$)\;
			$_{^*}$\If{$\neg$failure}{
				$_{^*}$\Return{($X, S, Y$)}
			}
		}
	}
	\Return{failure}\;
	\textbf{End Procedure}
	\caption{Main recursive procedure in Reed's algorithm}\label{alg:separation}
\end{algorithm}

\nopagebreak[4]

\subsection{The Correctness of Reed's Algorithm}\label{subse:correct_Reed}
If the treewidth is at most $k-1$, then there is a good tree decomposition of $G$ of width $k-1$. Let $x$ be a centroid in it. The connected components of $G[V \setminus B_x]$ can be partitioned into 2 parts $L$ and $R$, such that no part has more than $\frac{2}{3} |V|$ vertices. 

First, we prove the correctness of the algorithm for the case that $G$ is a connected graph (lines 16-25 of Algorithm~\ref{alg:separation}). Later, we describe the case that $G$ is not connected.

Note that for the correctness proof, we do not have to find this tree decomposition. It is sufficient to know that it exists. We can assume, that we have fixed such a tree decomposition, a centroid $x$ and the sets $L$ and $R$.

Every set $W'\subseteq V$ is partitioned into parts in $L$, the separator $S=B_x$, and $R$. 

One of the many branches of the procedure Separation$(G, k)$ working with a set $W'$, tries this partition of it, and succeeds, unless a previously taken branch has already succeeded. 
First, the procedure decides which part of $W'$ goes into $S$, one vertex $v$ at a time. This vertex $v$ is removed from $G$, but otherwise, we still consider the same tree decomposition. $|B_x|$ has now decreased by 1, as $v$ is removed from it.

We then consider the case that none of the remaining vertices in $W'$ are in the separator. The procedure decides which part $X \subseteq W'$ goes into $L$ (line 26) of Algorithm~\ref{alg:separation}. Now the weight of each part is at most $(\frac{2}{3} + \frac{1}{24}) n$ as at most $k$ small subtrees can have some of their vertices on the wrong side. And this is at most $k$ times less than $\frac{n}{24k}$ vertices.

On the branch of the procedure Separation$(G, k)$ which selects this partition of $W'$, there is the separator $B_x$ of size at most $k$ between $X$ and $Y$. Our algorithm cannot guarantee to find this separator $B_x$, but it will find some separator $S$ of size at most $k$ between $X$ and $Y$. Again up to $\frac{1}{24} n$ vertices can be on the opposite side of their representatives. Now the larger side can contain at most $(\frac{2}{3} + 2 \cdot \frac{1}{24}) n = \frac{3}{4} n$ vertices.
Thus we have a somewhat balanced partition (a constant fraction on each side).

Note that if at any point $G$ becomes disconnected in Algorithm~\ref{alg:separation}, it only works to our advantage, and we handle it separately in lines 2-15. Let $C_1, \dots, C_t$ be the connected components of $G$, and w.l.o.g., assume $C_1$ has the highest volume. In this case, there are two possibilities:
\begin{itemize}
    \item First, we consider the case that $C_1$ has volume (actual size) $\leq \frac{3}{4}n$. Then, if $|C_1|$ is already $\geq \frac{1}{4} n$ we are done. We return $C_1$ as the L.H.S. and the remaining components as the R.H.S. However, if $|C_1| < \frac{1}{4} n$, we add other components to $C_1$ until its volume passes $\frac{1}{4} n$ for the first time. Note that since $|C_1|$ has maximum volume, all other components have volume $\leq \frac{1}{4} n$, and adding them one at a time will not cause an issue.
    \item In the second case, $C_1$ has volume more than $\frac{3}{4}n$. Then, we run \textbf{SEPARATION}$(G[C_1], k)$ to get $(X', S', Y')$. This is a separation for $G[C_1]$. W.l.o.g., assume $X'$ does not have a higher weight than $Y'$. Then, we put all the vertices represented by $X'$ along with all other components ($X' \cup (\bigcup\limits_{i=2}^{t}C_i$)) on the L.H.S. 
    Note that:
    \begin{align*}
        \text{Vol}(X'\cup (\bigcup\limits_{i=2}^{t}C_i)) = \text{Vol}(X') + n - |C_1|.
    \end{align*}
    We know that $(\frac{1}{3}-\frac{1}{24})|C_1|\leq Vol(X') \leq (\frac{1}{2}+ \frac{1}{24})|C_1|$. Hence,
    \begin{align*}
        \text{Vol}(X'\cup (\bigcup\limits_{i=2}^{t}C_i))&\leq (\frac{1}{2}+\frac{1}{24})|C_1| + n - |C_1| = n - \frac{13}{24}|C_1| 
        \\&\leq \frac{19}{32}n < \frac{3}{4}n. 
    \end{align*}
    and
    \begin{align*}
        \text{Vol}(X'\cup (\bigcup\limits_{i=2}^{t}C_i)&\geq (\frac{1}{3}-\frac{1}{24})|C_1| + n - |C_1| = n - \frac{17}{24}|C_1|
        \\&\geq n - \frac{17}{24}n = \frac{7}{24}n >  \frac{1}{4}n. 
    \end{align*}
\end{itemize}

Furthermore,
\begin{align*}
    \text{Vol}(Y')\leq \frac{3}{4}|C_1| \leq \frac{3}{4}n,
\end{align*}
and
\begin{align*}
    \text{Vol}(Y') \geq (\frac{1}{2}-\frac{1}{12})|C_1|\geq \frac{5}{12}\cdot \frac{3}{4}n = \frac{5}{16}n > \frac{1}{4}n.
\end{align*}
So, in the case that $G$ is not connected, we still find a somewhat balanced separator.

We assume the $\mathcal{O}\left(3^{3k} k^2 n^2\right)$ 4-approximation algorithm of Robertson-Seymour \cite{robertson1995graph,reed_1997} (see Proposition 11.14 of Flum and Grohe \cite{Flum:2006:PCT:1121738}) is known. It handles a set $W \subseteq V$ of size $3k-2$. Working with $(G, W)$, it finds a separator $S$ of size at most $k$ to split $G-S$ into two parts $L$ and $R$ with both, $|W \cap X|$ and $|W \cap Y|$ at most $\frac{2}{3}|W|$. A tree node with bag $W \cup S$ is formed. Two recursive calls continue with $(G[L \cup S], (W \cap L) \cup S)$ and $(G[R \cup S], (W \cap R) \cup S)$, respectively. 

Reed's algorithm does the same steps to handle $W$. However, in order to decrease the dependence on $n$ of the running time from $\mathcal{O}\left(n^2\right)$ to $\mathcal{O}(n \log n)$, Reed intersperses these balanced partitions of $W$ with balanced partitions of $V$ (Algorithm~\ref{alg:separation}). In each case, it would be desirable that $V$ and $W$ simultaneously split in a balanced way. However, during the traditional splitting of $W$, the graph might be split very unbalanced, and during the new splitting of $V$, the set $W$ might not be split at all.

Reed's algorithm can alternate between splitting $W$ as in the $\mathcal{O}\left(3^{3k} k^2 n^2\right)$ algorithm and splitting $V$.
Now $W$ can be of size at most $6k$. On each side, we have at most $(\frac{2}{3} \cdot 6k) = 4k$. Splitting by $W$ as well as splitting by $V$ adds $k$ to the new $W$. Thus, we are back at $6k$. The constructed tree decomposition has then width at most $7k$. But we show that this can be improved to a 5-approximation algorithm. We do not need to alternate between splitting $W$ and $V$. Splitting $V$ is a costly procedure. We can do it only after every $\log_{\frac{3}{2}}{k}$ steps and we still spend only $\mathcal{O}(f(k) n\log{n})$ time.

We start with $W$ of size at most $4k$ ($3k$ and $k_{excess}$ as excess). Initially, $k_{excess} = k$. Each time we split $W$, we get 
\[|W| \leq \frac{2}{3}\cdot 3k + \underbrace{k}_{\text{adding separator}} + \frac{2}{3}k_{excess} = 3k +\frac{2}{3} k_{excess},\]
and then update $k_{excess} \leftarrow \lfloor \frac{2}{3}k_{excess}\rfloor$. The excess drops by a factor of $\frac{2}{3}$. After $\log_{\frac{3}{2}}{k}$ steps, the excess becomes zero and then we can split by $V$. At this point, $|W|$ could increase to $4k$ again ($3k$ was the size of $W$ before this step, and when we split by $V$, we have to include the separator as well). We end up with $|W| \leq 4k$ and we add the separator to the root bag, which means the largest bag has size at most $5k$. Therefore, it is a 5-approximation algorithm. Reed mentions $5k$ in his paper, but he does not mention the frequency of the two operations. We don't know whether he had the same modification in mind. Simple alternation between the two operations only achieves $7k$. 

\subsection{Running Time of Reed's Algorithm}\label{sec:time}
Let $T(n,k)$ be the running time of the procedure $SEPARATION(G,k)$ for $G = (V,E)$ and $n = |V|$. Let $n'$ and $k'$ be the current bound on the graph size and current separator capacity. Initially $n' = n$ and $k' = k$. We have the following recurrence for Reed's algorithm.
\begin{equation}
	T(n',k') \leq 24k'T(n'-1, k'-1)+ 2^{24k'}  \underbrace{c(k'+1)kn'}_\text{flow algorithm},
\end{equation}
for some $c>0$.
It is difficult to obtain a good solution, but by induction on $k'$ we get the following loose upper bound.
\begin{equation}
T(n',k') \leq 3 c 2^{24k'} k'! \, kn.
\end{equation}
For $k'=0$, this bound is valid. For $k'\geq 1$, we have:
\begin{equation*} 
\begin{split}
T(n',k') &\leq 24k'T(n'-1, k'-1)+ c2^{24k'}(k'+1)kn'\\
&\leq 3 c 24 k' 2^{24(k'-1)}(k'-1)! \,kn + c2^{24k'}(k'+1)kn'  \qquad \text{by induction hypothesis}\\
&= c2^{24k'}kn(\frac{3\cdot 24k'!}{2^{24}} + k'+1) \leq 3c2^{24k'}knk'!\,(\frac{24}{2^{24}} + \frac{k'+1}{3\, k'!})\\
&\leq 3c2^{24k'}k'!\,kn.
\end{split}
\end{equation*}
Even though, this is not a tight bound, we have $T(n,k) \geq c' 24^k k! (n-k)$, which is $2^{\Omega (k \log k)}n$.
\begin{eqnarray*}
T(n,0) & \geq & c' n \\
T(n,k) & \geq & 24k T(n-1,k-1) \\
	& \geq & c' 24k 24^{k-1} (k-1)!  \, (n-k) \text{ by inductive hypothesis} \\
	& \geq & c' 24^k k! \,  (n-k)
\end{eqnarray*}

\section{Our Improved Algorithm}\label{sec:ours}
In this section, we discuss how to improve Reed's algorithm. The dependence on $k$ in the running time of Reed's algorithm is huge. We decrease this factor significantly to make the algorithm more applicable. We introduce two main modifications. First, we work with a larger cut-off threshold than Reed's $\frac{|V|}{24k}$. Such an improvement can be achieved by replacing the arbitrary $3/4$ bound by $1- \epsilon$. But even more is possible by arguing about the weights of connected components instead of the weights of the parts of a bipartition.

The second improvement is to avoid branching on whether there is a representative going into the separator or not. Reed branches on these two cases at the beginning, while we branch 3-fold for every representative.

Note that the second improvement is not obvious. Reed had a good reason to avoid 3-fold branching. If a representative of tree is put into the separator, then, we lose control of the unbounded set of additional vertices of the tree.

\subsection{Relax the balancing requirement}\label{sec:divide}
Reed's argument starts with a weakly balanced separation by volume (i.e. according to $V$) that is known to exist. The larger side has at most $2/3$ of the volume, but it might have up to $2/3 + 1/24$ of the weight. The algorithmic split by this weight partition might pick a set with another $1/24$ fraction more volume. Thus the worst kind of volume split found is now $3/4$ to $1/4$. Recall that these differences are bounded this way for the following reason. When the weight carrying root of a tree is on one side, some of its small subtrees rooted at the children can be partially on the other side. But the separator prevents more than $k$ small subtrees to have any part on a different side than the root, and each small subtree contains less than $\frac{|V|}{24k}$ vertices. 
Instead of $3/4$, one can choose any number strictly between $2/3$ and $1$. If $1-\epsilon$ is chosen, then the constant 24 is replaced by $\frac{1}{((1-\epsilon) - 2/3)/2} = \frac{6}{1 - 3 \epsilon} \leq 6 + 24 \epsilon$ for $\epsilon \leq 1/12$.

A better improvement is possible by a modification of the analysis. 
If $G$ is connected, then
we know that a balanced $V$-separator $S'$ exists. It is a separator by volume. 
We fix such an $S'$ for the analysis. The algorithm does not have to find it.
$S'$ defines a set of connected components $G_1, \dots, G_{t}$ of $G-S'$ with vertex sets $V_1, \dots, V_{t}$.
We focus on the set $\cal B$ of those branches of the algorithm which put any vertex $v \in W'$ into the separator if and only if $v \in S'$.
For each branch of $\cal B$, the weight of $G_i$ is the weight of $W' \cap V_i$.
We choose a larger value of $s$, namely
$s = \lfloor(1/4 - \epsilon/2)\frac{|V|}{k}\rfloor$.
Because $|V_i| \leq \frac{|V|}{2}$, the weight of each connected component $G_i$ is at most $(3/4 - \epsilon/2)|V|$.
We group $V_1, \dots, V_{t}$ into $L'$ and $R'$ such that neither $L'$ nor $R'$ have weight more than $(3/4 - \epsilon/2)|V|$. 

Now, there exists exactly one branch of $\cal B$ that places a vertex $v \in W'$ left if and only if $v \in L'$. Otherwise such a vertex $v \in W'$ is placed right on this branch. 
Again, note that the algorithm does not know $L'$ and $R'$. However, as it tries all possible placements of $W'$, one branch is good. 

The algorithm finds a separator $S$ that allows the same 2-partition of $W' \setminus S'$ as $S'$. The separator $S$ found by the algorithm might be different from $S'$. However, as $S'$ and $S$ agree on $W'$, the volume ratio is not worse than $(1-\epsilon)$ to $\epsilon$.

With this improvement, the constant 24 of Reed's algorithm is replaced by
 \[\frac{1}{((1-\epsilon) - 1/2)/2} = \frac{4}{1 - 2 \epsilon} \leq 4 + 12 \epsilon \text{ for $\epsilon \leq 1/6$.}\]

\subsection{Main Improvement}
The other improvement is to allow the representatives (the roots of the subtrees) to go either, left, right, or into the separator. Once a representative $v$ goes into the separator, we change its weight to 0. We also delete $v$ from $G$ and unmark all of the vertices in its subtree, so that they can be searched again. Then, we continue the DFS from the parent of $v$.  There is one complication here that we have to take care of; what happens if $G-v$ gets disconnected?

Three cases might happen. We cover all the cases and show that they can be handled. Let $x, y,$ and $z$ be three types of children of $v$ with the small subtrees (with size $ < \frac{n}{Ck}$) $\tau_x$, $\tau_y$, and $\tau_z$, rooted respectively (Fig. \ref{fig:fixing}(a)). The cases are:
\begin{itemize}
	\item There is a back edge from a vertex in $\tau_x$ to an ancestor of $v$. If we delete $v$ from the tree, the vertices of $\tau_x$ will still be searched because they are connected to an ancestor of $v$. So, we just need to color them white once more. Hence, we do not need to worry about this case.
	\item There is a vertex $p$ in $\tau_y$ with subtree $\tau_q$ attached to it with $q$ as its root such that $q$ is a representative below $\tau_y$ which has not gone into the separator (this is a bottom-up approach). Even though this case seems to be troublesome, we can fix it. Let $\tau_{p, y}$ be $\tau_y$ rooted at $p$ (dangling from $p$). Make $p$ a child of $q$. Now, the problem has been fixed (Fig. \ref{fig:fixing}(b)). The same reasoning applies when $v$ is a root and its subtree is too small.
	
	Notice that in this case, it is possible that there is a back edge to some vertex higher up as well, but it does not change anything here. We handle it by attaching to the subtree below it. Alternatively, one could handle it like the previous case. Both are fine.
	\item There is no back edge from $\tau_z$ to an ancestor of $v$, and there is no subtree below. So deleting $v$ makes $\tau_z$ disconnected from the entire tree, and this is only for our advantage.
\end{itemize}
Now, the question is how to distinguish between these three cases?

We just need to distinguish between case 2 and the other two cases (together) because case 1 and 3 are handled the same way.

Cases 1 and 3 are actually the simple ones. We just need to color the vertices in their subtrees white and then we are done. So, we have to find a way to recognize case 2. For that, each vertex $u$ in $\tau(y)$ saves a possible representative $q$ with an edge to some $p\in \tau(y)$ such that $p$ is an ancestor of $q$ (defined as $rep\_\,connected$ in the pseudocode). If no such representative exists, simply set its value to nil. Notice that we just need to store one representative connected down below since it is sufficient to handle this case as we just described. Each vertex passes that information to its parent. So, if $rep\_\,connected[y] \neq nil$, we know that the case 2 has happened. Otherwise, case 1 or 3 has happened and we do not need to know which one since they are handled the same way. Now, let's go back the algorithm description.
\tikzset{every picture/.style={line width=0.75pt}} 

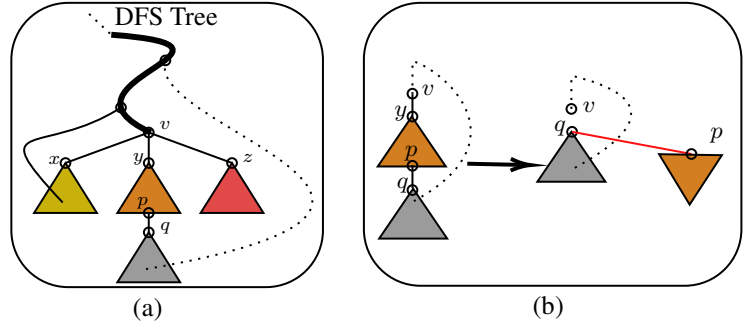
\begin{figure}
	\centering

	\tikzset{every picture/.style={line width=0.75pt}} 

	\begin{tikzpicture}[x=0.75pt,y=0.75pt,yscale=-0.7,xscale=0.7]
		
		\draw  [fill={rgb, 255:red, 155; green, 155; blue, 155 }  ,fill opacity=1 ] (295,170) -- (319.5,206) -- (270.5,206) -- cycle ;
		\draw   (101,129.5) .. controls (101,127.57) and (102.57,126) .. (104.5,126) .. controls (106.43,126) and (108,127.57) .. (108,129.5) .. controls (108,131.43) and (106.43,133) .. (104.5,133) .. controls (102.57,133) and (101,131.43) .. (101,129.5) -- cycle ;
		\draw  [fill={rgb, 255:red, 201; green, 177; blue, 12 }  ,fill opacity=1 ] (44.5,151.5) -- (67.13,187.5) -- (21.88,187.5) -- cycle ;
		\draw    (104.5,129.5) -- (44.5,151.5) ;
		\draw    (104.5,129.5) -- (104.5,151.5) ;
		\draw    (104.5,129.5) -- (164.5,151.5) ;
		\draw [line width=2.25]    (77.5,59) .. controls (197.5,68) and (26.5,93) .. (106,129.5) ;
		\draw    (84.5,111.5) .. controls (28.5,147.5) and (-16.5,117) .. (42.5,180) ;
		\draw   (81,111.5) .. controls (81,109.57) and (82.57,108) .. (84.5,108) .. controls (86.43,108) and (88,109.57) .. (88,111.5) .. controls (88,113.43) and (86.43,115) .. (84.5,115) .. controls (82.57,115) and (81,113.43) .. (81,111.5) -- cycle ;
		\draw   (113,77.5) .. controls (113,75.57) and (114.57,74) .. (116.5,74) .. controls (118.43,74) and (120,75.57) .. (120,77.5) .. controls (120,79.43) and (118.43,81) .. (116.5,81) .. controls (114.57,81) and (113,79.43) .. (113,77.5) -- cycle ;
		\draw  [dash pattern={on 0.84pt off 2.51pt}]  (61.5,44) -- (77.5,59) ;
		\draw  [fill={rgb, 255:red, 211; green, 126; blue, 33 }  ,fill opacity=1 ] (104.5,151.5) -- (127.13,187.5) -- (81.88,187.5) -- cycle ;
		\draw  [fill={rgb, 255:red, 226; green, 74; blue, 74 }  ,fill opacity=1 ] (164.5,151.5) -- (187.13,187.5) -- (141.88,187.5) -- cycle ;
		\draw   (161,151.5) .. controls (161,149.57) and (162.57,148) .. (164.5,148) .. controls (166.43,148) and (168,149.57) .. (168,151.5) .. controls (168,153.43) and (166.43,155) .. (164.5,155) .. controls (162.57,155) and (161,153.43) .. (161,151.5) -- cycle ;
		\draw   (41,151.5) .. controls (41,149.57) and (42.57,148) .. (44.5,148) .. controls (46.43,148) and (48,149.57) .. (48,151.5) .. controls (48,153.43) and (46.43,155) .. (44.5,155) .. controls (42.57,155) and (41,153.43) .. (41,151.5) -- cycle ;
		\draw   (101,151.5) .. controls (101,149.57) and (102.57,148) .. (104.5,148) .. controls (106.43,148) and (108,149.57) .. (108,151.5) .. controls (108,153.43) and (106.43,155) .. (104.5,155) .. controls (102.57,155) and (101,153.43) .. (101,151.5) -- cycle ;
		\draw  [fill={rgb, 255:red, 155; green, 155; blue, 155 }  ,fill opacity=1 ] (104.5,201.5) -- (127.13,237.5) -- (81.88,237.5) -- cycle ;
		\draw   (101,187.5) .. controls (101,185.57) and (102.57,184) .. (104.5,184) .. controls (106.43,184) and (108,185.57) .. (108,187.5) .. controls (108,189.43) and (106.43,191) .. (104.5,191) .. controls (102.57,191) and (101,189.43) .. (101,187.5) -- cycle ;
		\draw  [dash pattern={on 0.84pt off 2.51pt}]  (102.5,228) .. controls (375.5,181) and (93.5,124) .. (117.5,77) ;
		\draw   (5.5,77) .. controls (5.5,54.36) and (23.86,36) .. (46.5,36) -- (189,36) .. controls (211.64,36) and (230,54.36) .. (230,77) -- (230,200) .. controls (230,222.64) and (211.64,241) .. (189,241) -- (46.5,241) .. controls (23.86,241) and (5.5,222.64) .. (5.5,200) -- cycle ;
		\draw   (259.5,76.8) .. controls (259.5,54.27) and (277.77,36) .. (300.3,36) -- (490.7,36) .. controls (513.23,36) and (531.5,54.27) .. (531.5,76.8) -- (531.5,199.2) .. controls (531.5,221.73) and (513.23,240) .. (490.7,240) -- (300.3,240) .. controls (277.77,240) and (259.5,221.73) .. (259.5,199.2) -- cycle ;
		\draw  [fill={rgb, 255:red, 211; green, 126; blue, 33 }  ,fill opacity=1 ] (294,118) -- (318.5,154) -- (269.5,154) -- cycle ;
		\draw   (291,153.5) .. controls (291,151.57) and (292.57,150) .. (294.5,150) .. controls (296.43,150) and (298,151.57) .. (298,153.5) .. controls (298,155.43) and (296.43,157) .. (294.5,157) .. controls (292.57,157) and (291,155.43) .. (291,153.5) -- cycle ;
		\draw    (294.5,153) -- (294.5,169.5) ;
		\draw   (291,171) .. controls (291,169.07) and (292.57,167.5) .. (294.5,167.5) .. controls (296.43,167.5) and (298,169.07) .. (298,171) .. controls (298,172.93) and (296.43,174.5) .. (294.5,174.5) .. controls (292.57,174.5) and (291,172.93) .. (291,171) -- cycle ;
		\draw [line width=1.5]    (334,152) -- (376.5,152.93) ;
		\draw [shift={(379.5,153)}, rotate = 181.26] [color={rgb, 255:red, 0; green, 0; blue, 0 }  ][line width=1.5]    (14.21,-4.28) .. controls (9.04,-1.82) and (4.3,-0.39) .. (0,0) .. controls (4.3,0.39) and (9.04,1.82) .. (14.21,4.28)   ;
		\draw   (291,118) .. controls (291,116.07) and (292.57,114.5) .. (294.5,114.5) .. controls (296.43,114.5) and (298,116.07) .. (298,118) .. controls (298,119.93) and (296.43,121.5) .. (294.5,121.5) .. controls (292.57,121.5) and (291,119.93) .. (291,118) -- cycle ;
		\draw    (294.5,102) -- (294.5,118.5) ;
		\draw  [dash pattern={on 0.84pt off 2.51pt}]  (294.5,102) -- (294.5,78) ;
		\draw   (291,101.5) .. controls (291,99.57) and (292.57,98) .. (294.5,98) .. controls (296.43,98) and (298,99.57) .. (298,101.5) .. controls (298,103.43) and (296.43,105) .. (294.5,105) .. controls (292.57,105) and (291,103.43) .. (291,101.5) -- cycle ;
		\draw  [fill={rgb, 255:red, 155; green, 155; blue, 155 }  ,fill opacity=1 ] (409,129) -- (433.5,165) -- (384.5,165) -- cycle ;
		\draw   (406,129) .. controls (406,127.07) and (407.57,125.5) .. (409.5,125.5) .. controls (411.43,125.5) and (413,127.07) .. (413,129) .. controls (413,130.93) and (411.43,132.5) .. (409.5,132.5) .. controls (407.57,132.5) and (406,130.93) .. (406,129) -- cycle ;
		\draw   (405.5,112.5) .. controls (405.5,110.57) and (407.07,109) .. (409,109) .. controls (410.93,109) and (412.5,110.57) .. (412.5,112.5) .. controls (412.5,114.43) and (410.93,116) .. (409,116) .. controls (407.07,116) and (405.5,114.43) .. (405.5,112.5) -- cycle ;
		\draw  [dash pattern={on 0.84pt off 2.51pt}]  (409,112.5) -- (409,88.5) ;
		\draw  [dash pattern={on 0.84pt off 2.51pt}]  (295.5,179) .. controls (377.5,134) and (314.5,83) .. (294.5,78) ;
		\draw  [dash pattern={on 0.84pt off 2.51pt}]  (411.5,155) .. controls (493.5,110) and (429,93.5) .. (409,88.5) ;
		\draw [color={rgb, 255:red, 254; green, 14; blue, 14 }  ,draw opacity=1 ]   (409.5,129) -- (495.5,145) ;
		\draw  [fill={rgb, 255:red, 211; green, 126; blue, 33 }  ,fill opacity=1 ] (494.2,181.5) -- (472.17,145.13) -- (517.42,145.87) -- cycle ;
		\draw   (492,145) .. controls (492,143.07) and (493.57,141.5) .. (495.5,141.5) .. controls (497.43,141.5) and (499,143.07) .. (499,145) .. controls (499,146.93) and (497.43,148.5) .. (495.5,148.5) .. controls (493.57,148.5) and (492,146.93) .. (492,145) -- cycle ;
		\draw   (101,201.5) .. controls (101,199.57) and (102.57,198) .. (104.5,198) .. controls (106.43,198) and (108,199.57) .. (108,201.5) .. controls (108,203.43) and (106.43,205) .. (104.5,205) .. controls (102.57,205) and (101,203.43) .. (101,201.5) -- cycle ;
		\draw    (104.5,187.5) -- (104.5,201.5) ;

		\draw (110,119) node [anchor=north west][inner sep=0.75pt]  [font=\scriptsize] [align=left] {$\displaystyle v$};
		\draw (30,144) node [anchor=north west][inner sep=0.75pt]  [font=\scriptsize] [align=left] {$\displaystyle x$};
		\draw (91,142) node [anchor=north west][inner sep=0.75pt]  [font=\scriptsize] [align=left] {$\displaystyle y$};
		\draw (170,143) node [anchor=north west][inner sep=0.75pt]  [font=\scriptsize] [align=left] {$\displaystyle z$};
		\draw (110.5,192) node [anchor=north west][inner sep=0.75pt]  [font=\scriptsize] [align=left] {$\displaystyle q$};
		\draw (78,37) node [anchor=north west][inner sep=0.75pt]   [align=left] {DFS Tree};
		\draw (287,137) node [anchor=north west][inner sep=0.75pt]  [font=\footnotesize] [align=left] {$\displaystyle p$};
		\draw (279.5,108) node [anchor=north west][inner sep=0.75pt]  [font=\footnotesize] [align=left] {$\displaystyle y$};
		\draw (91,247) node [anchor=north west][inner sep=0.75pt]   [align=left] {(a)};
		\draw (379,245) node [anchor=north west][inner sep=0.75pt]   [align=left] {(b)};
		\draw (281,160) node [anchor=north west][inner sep=0.75pt]  [font=\footnotesize] [align=left] {$\displaystyle q$};
		\draw (299,93) node [anchor=north west][inner sep=0.75pt]  [font=\footnotesize] [align=left] {$\displaystyle v\ $};
		\draw (394.5,119) node [anchor=north west][inner sep=0.75pt]  [font=\footnotesize] [align=left] {$\displaystyle q$};
		\draw (415,105) node [anchor=north west][inner sep=0.75pt]  [font=\footnotesize] [align=left] {$\displaystyle v\ $};
		\draw (507,124) node [anchor=north west][inner sep=0.75pt]  [font=\footnotesize] [align=left] {$\displaystyle p$};
		\draw (94,174) node [anchor=north west][inner sep=0.75pt]  [font=\scriptsize] [align=left] {$\displaystyle p$};

	\end{tikzpicture}
	
	\caption{How to fix the cases after a representative goes to the separator. Here, $v$ and $q$ are representatives and $v$ is going to be sent into the separator.}
	\label{fig:fixing}
\end{figure}

The main difference here is that Reed branches in the beginning and considers two cases. In the first case, none of the representatives goes to the separator, and in the second case at least one goes to the separator. In the second case, Reed's algorithm branches into at most $24k$ (upper bound for the number of subtrees). We want to avoid these branches and each time only branch into three cases. Assume we want to decide where to put $v$ (a representative with weight $w(v)$). Let $L$, $S$, and $R$ be the left, the separator, and the right sets, respectively. If we put $v$ into $L$ (or $R$), usually most of the vertices in its subtree will be in $L$ (or $R$) as well. In case $v$ goes to $S$, we release the other vertices of its tree to be searched again (as we described above and also as it is mentioned in the pseudocode). We have to mention that unlike Reed, we do not decide at the beginning if at least one vertex is going to the separator. Instead, we consider this case for every representative only when we handle that representative (remember that we handle the representative in a serial way).

\begin{theorem}\label{thm:main_one}
 Let $C_0 = 4 + 4\log 1.25 + \log 5 <7.60965$. For any $C>C_0$, there exists a 5-approximation algorithm that solves treewidth in time $\mathcal{O}(2^{Ck} n \log n)$, where $k$ is treewidth+1.
 \end{theorem}
\begin{proof}

Here, we include the pseudocodes of our algorithms (\ref{alg:nextrep-imp} and \ref{alg:dfs-imppp}), but before that we introduce the global variables.

\begin{tcolorbox}[width=\textwidth,colback={blue!4!white},title={\textcolor{black}{Global Variables}},colbacktitle=yellow!10!white,coltitle=black!50!white]    
   \begin{itemize}
       \item $G$: the input graph
       \item $root$: a fixed vertex to start all the DFS visits
       \item $W'$: the list containing all the representatives (that do not go into the separator)
       \item $\mathcal{W}$: the list containing the weight of every representative
       \item $decision$: the string that dictates where we should  place the representatives.
       \item $S$: the separator
       \item $k$: an upper bound on the capacity of the separator in the beginning.
       \item $k'$: an upper bound on the capacity of the separator at any given time.
   \end{itemize}
\end{tcolorbox}

\begin{algorithm}
	\DontPrintSemicolon
	\SetAlgoLined
	\textbf{Procedure INIT\_DFS()} \\
	\textbf{Declarations:}\\
	$_{^*}$ $\mathcal{R}$\_tree\_info $= \left(\text{list of vertices  }\mathcal{R}\_vertices, \text{ vertex } rep\_connected\right)$\;
	$_{^*}$ $s = \lfloor(1/4 - \epsilon/2)\frac{|V|}{k}\rfloor$ \tcp*{$s:$ the size bound for splitting off an $\mathcal{R}$-tree.}
	$_{^*}$ $decision = ``\,"$\tcp*{No representative should go into the separator}
	$_{^*}$\,\,\While{TRUE}{
	$_{^*}$ Initialize $W'$ and $\mathcal{W}$ with empty lists\tcp*{$W'$ is the (ordered) list of the representatives and $\mathcal{W}$ is the (ordered) list of their corresponding weights. $\mathcal{W} = \left(w[i]_{1\leq i \leq |W'|}\right)$, where $w[i]$ is the weight of the $i$-th representative.}
	$_{^*}$ $k' = k$\;
	$_{^*}$\,\,\For{\text{all } $v \in V$}{$_{^*}$ $color[v] = $ WHITE}
	$_{^*}$\,\,vertex\_\,rep = DFS\_VISIT(root)\;
	$_{^*}$ $x.\mathcal{R}\_vertices = x.\mathcal{R}\_vertices \circ vertex\_rep.\mathcal{R}\_vertices$, where $x$ is the last representative added to $W'$\tcp*{If the topmost subtree has smaller size than $s$, we hook it onto the last representative $x$.}
	$_{^*}$\,\,\If{$x.rep\_connected == nil$}{
	$_{^*}$ $x.rep\_connected = vertex\_rep.rep\_connected$
	}
	$_{^*}$\,\,\If{$k'==0$}{
	$_{^*}$\,\,\If{connected components of $G - S$ can form a balanced bipartition $(L, R)$}{
	$_{^*}$\,\,\Return{$(L, R)$}
		}
	}
	$_{^*}$\,\,\Else{try all partitions of $W'$ into $X$ and $Y$ to see if there exists a balanced separator.\;
	$_{^*}$\,\,\If{such a partition exists}{
	    $_{^*}$\,\,\Return{$(L, R)$}\tcp*{$L$ is $X$ along with all the vertices represented by vertices of $X$ (analogously $R$ corresponds to $Y$).}
	}
	}
	\tcp*{Now, we update the decision to look for the next possibility}
	$_{^*}$ $decision = $ lexicographically next string of at most the same length.\tcp*{For instance, $0011001$ follows $011000$ and $101$ follows $1001111$}
	$_{^*}$\,\,\If{no such string exists}{$_{^*}$\,\,\Return{``Treewidth is $> k$"}\;
	}
	$_{^*}$ Drop the ``0"s to the right of the rightmost ``1" in $decision$.\tcp*{We always remove the rightmost "0"s and add them back only if necessary (in the DFS-Visit() procedure)}
	}

	\textbf{End Procedure}
	\caption{The procedure to initialize the DFS}\label{alg:nextrep-imp}
\end{algorithm}

\begin{algorithm}
	\DontPrintSemicolon
	\SetAlgoLined
	\KwResult{$\mathcal{R}$tree\_info about the current partial $\mathcal{R}$tree}
	\textbf{Procedure DFS-Visit($u$)} \\
	$_{^*}$ $this = $ new $\mathcal{R}$\_tree\_info($\{u\}, nil$)\;
	$_{^*}$ color[$u$] = GRAY\;
	$_{^*}$\,\,\While{ \text{there exists $v$ adjacent to $u$ such that $color[v] ==$ WHITE}}{
			$_{^*}$ $child = \text{DFS\_VISIT}(v)$\;	
			$_{^*}$ $this.\mathcal{R}\_vertices = this.\mathcal{R}\_vertices\circ child.\mathcal{R}\_vertices$\; 
			$_{^*}$ color[v] = BLACK\;
		$_{^*}$ \If{$this.rep\_\,connected == nil$}{$_{^*}$ $this.rep\_\,connected =  child.rep\_\,connected$\tcp*{Each vertex remembers one of the possible representatives to which one of its descendants in the current $\mathcal{R}$tree is connected to and passes it to its parent.}
		}
	} 
	$_{^*}$ length $= |W'| + (k-k')$\;
	$_{^*}$\,\,\If{$|this.\mathcal{R}\_vertices|\geq s$}{
	$_{^*}$ \If{ len(decision) $< $length\tcp*{$0$ means place it left/right. Also, we assume that the decision is padded with unlimited zeros to the right of decision that we do not show after the last $1$, where $1$ means place into the separator}}{
	    $_{^*}$ $decision = decision \, \circ ``0"$\tcp*{where $\circ$ is string concatenation}
	}
	$_{^*}$ \If{ $decision[\text{length}] == 0$\tcp*{"$0$" means place it left/right}}{
	    $_{^*} W' = W'\cup\{this\}$\;
	    $_{^*}$ \Return{$(\emptyset, u)$}
	}
	$_{^*}$\,\,\Else{\tcp*{i.e. decision$[\text{length}]$ == 1 which means send $this$ into the separator}
	$_{^*}$ $k' = k'-1$\;
	$_{^*}$\,\,\If{$this.rep\_connected \neq nil$\tcp*{Handling the middle case of Figure~\ref{fig:fixing}}}{
	$_{^*}$ $w[this.rep\_connected] = w[this.rep\_connected] + |this.\mathcal{R}\_vertices|$ 
	$_{^*}$ \Return{$this$}\;
	}
	$_{^*}$\Else{\For{all $z \in this.\mathcal{R}nodes$}{$_{^*}$ $color[z] = WHITE$\tcp*{Either the left or the right case of Figure \ref{fig:fixing} has happened. In either case we do not need to worry.}}
	$_{^*}$\,\,\Return{$(\emptyset, nil)$}}
	}
	}
	\textbf{End Procedure}
	\caption{Main recursive procedure of DFS-Trees}\label{alg:dfs-imppp}
\end{algorithm}

The following section includes additional explanations on Algorithms \ref{alg:nextrep-imp} and \ref{alg:dfs-imppp} as well the proof of correctness.

\subsection{The Correctness of Our Algorithm}

We use a subroutine of Robertson-Seymour's 4-approximation algorithm \cite{robertson1995graph,reed_1997} to find a balanced separator of any given set $W$. We call this splitting ``split by $W$".  If we continue just by splitting $W$, we end up having quadratic dependence on $n$ in the running time in a worst case scenario. In order to avoid this, after every $\log_{\frac{3}{2}}k$ such splittings, we split based on $V$ (the reasoning can be found in Section~\ref{subse:correct_Reed}). Splitting by $V$ is handled by finding a collection of representatives and then splitting the representatives in a balanced way. As we mentioned earlier, at most $\frac{n}{C_{\epsilon}}$ vertices would be on the opposite side of their representatives. Hence, this way of splitting procedure splits the entire graph in a somewhat balanced way without actually deciding on every vertex (if there exists a balanced separator at all). Below we explain the latter part in more detail.

We do a DFS and the first time that we post-visit a vertex with at least $\frac{n}{C_{\epsilon}k}$ vertices in its subtree, we call that vertex a representative which represents the vertices of its subtree. Then, we decide what to do with that representative. 

Algorithm \ref{alg:nextrep-imp} initiates the process by calling the DFS on $G$ starting from a fixed vertex $root$. The reason that we fix the starting point is that we want to have a deterministic DFS search so that when we change our ``{\it decision}", the DFS follows the same order.

{\it Decision} is a string $\in \{0,1\}^*$ that tells us what to do after we find a representative. If one {\it decision} was not good (i.e., it did not give us a balanced separator), we check the next {\it decision}.

We start with the decision that no representative should go into the separator ($decision = ``\,"$). Notice that {\it decision} is a string. The reason is that we want to keep the leading $0$'s. Decision either is empty or ends with a $1$ (we stop at the right-most $1$ and assume the string is padded with an unlimited\footnote{In fact, $C_{\epsilon}k$ is sufficient.} number of $0$s to the right). If necessary, we keep adding those 0's to the right.  

If $decision[i] = 0$, we send the $i$-th representative left/right. If it was $1$, we send that vertex into the separator and do not consider it a representative anymore. If $|decision|$ is too short, this means it was supposed to be $0$ (because the string can be padded with unlimited 0s). Then, we add $0$ to the R.H.S. of the {\it decision} string and send this representative left/right. Sending left/right means just place it in $W'$, for now. At the end of this iteration, check all the possible ways of splitting $W'$ into $X$ and $Y$ to find a balanced separator. If such a split exist, return $(L, R)$, where $L = X \cup \{\text{any vertex represented by a representative in $X$}\}$ (analogously $R$ corresponds to $Y$). If none exist, check the next {\it decision} and start over.

Then, the while loop on line 6 checks the entire search space. For any fixed {\it decision}, we deterministically know what should happen to the representatives that we find. Each time we have a new decision, we set the variables back to their original values and restart searching for a balanced separator.

In line 12, we call the DFS\_Visit on $root$. DFS\_Visit returns a set of vertices (either the subtree that is too small to form a group represented by a representative or $\emptyset$ if they form a group) and the information whether any vertex in the set returned in the first argument is connected to another representative down below. (We will see later why we need this.) We refer to this pair of information as $\mathcal{R}$\_tree\_info (check line 3 of Algorithm~\ref{alg:nextrep-imp}).  
There is a chance that the last vertex post-visited (which is $root$) is not a representative, meaning that the size of its subtree is smaller than the threshold $s$. In this case, we hook the subtree rooted at {\it root} to the last representative that we have found (similar to part b of Figure \ref{fig:fixing}). This happens in line 13.

The main process happens in DFS\_VISIT. Whenever it post-visits a vertex, it checks whether its subtree is big enough. If the size of the current subtree hits the threshold, it forms a group and decides whether the current representative should go left/right or into the separator based on $decision$ (0 means send it into the separator and 1 means send it left/right). Otherwise, it passes the set of nodes to the parent along with a possible representative connected to its subtree down below.  

In line 28 of Algorithm \ref{alg:nextrep-imp}, we update the decision if the current decision has failed.

The next {\it decision} is the next string in lexicographical order with at most the same length. For example
\[
\begin{cases}
 1 & \text{follows } 0\\
 100111& \text{follows }100110111\\
 001011 & \text{follows } 0010101111111
\end{cases}
\] 

The reason that we do not check all the decision space is that if {\it decision} leads us to a failure, there is no reason to check a string with {\it decision} as one of its prefixes. Notice that every time we run out of the characters on the R.H.S. of {\it decision}, we just concatenate a `0' to the R.H.S. of it.

If $tw(G) \leq k$, then there exists a tree decomposition of $G$ namely $\mathcal{T}$ with width (at most) $k$. Then, based on Lemma~\ref{lem:cent}, there exists a strong centroid $x$ in $\mathcal{T}$. Both $\mathcal{T}$ and $x$ are unknown, but they do exist (if we knew, we already had a tree decomposition of width $k$ and a balanced separator). The connected components of $G\setminus B_x$ can form a bipartition ($L, R$) such that no part has more than $\frac{2}{3}n$ vertices. For the sake of argument, fix $\mathcal{T}$, $x$, $L$, and $R$. Every $W'\subseteq V$ is partitioned into parts in $L$ (name it $X$), parts in separator $S = B_x$, and parts in $R$ (name it $Y$). The weight of each part is at most $\left(\frac{2}{3} + \frac{1}{C_{\epsilon}}\right)n$. One of the branches picks $W'$. On the branch that picks $W'$, the separator $B_x$ lies between $X$ and $Y$. There is no guarantee that we find this separator, but the algorithm finds at least one separator of size $\leq k$ separating $X$ from $Y$ in a balanced way. Once more, up to $\frac{1}{C_{\epsilon}}n$ vertices are on the opposite side of their representative. Now, the larger side has at most $\frac{2}{3}n+ \frac{2}{C_{\epsilon}}n$ vertices and the smaller side has at least $\frac{1}{3}n - \frac{2}{C_{\epsilon}}n$ vertices. As we argued, we set $C_{\epsilon}\leftarrow\frac{4}{1-2\epsilon}$. This means that in a worst case scenario, our algorithm finds an $(\epsilon, 1-\epsilon)$-separation. Each part has a constant fraction of the vertices. Hence, we have a somewhat balanced separator.

The approximation ratio analysis is similar to what we described in Section~\ref{subse:correct_Reed}. We start with $|W| \leq 4k$. It can be written as $3k + k$. Let's call the second term, $k_{excess}$. Initially, $k_{excess} = k$. Each time we split $W$, the separator separates $W$ into two parts with the largest part having at most $2/3|W|$ vertices of $W$. We include the separator itself on both sides before recursing. This means 
\[|W| \leq \frac{2}{3}(3k + k_{excess}) + \underbrace{k}_\text{upper bound on the size of the separator} = 3k + \overbrace{\frac{2}{3}k_{excess}}^\text{new $k_{excess}$}.\]
Each time, the bound on $k_{excess}$ decreases by a factor of $2/3$. After $\log_{\frac{3}{2}} k$  iterations, $k_{excess}$ drops to zero. At this point, we do one split by $V$. We have to add the separator to both parts and this means $|W|$, which was upper bounded by $3k$, might go up to $4k$ once more. Now, we can do $\log_{\frac{3}{2}} k$ splits by $W$ again and continue as described.  

Notice that throughout the entire process $|W| \leq 4k$. Ultimately, in order to merge the two tree decompositions that we found for two subproblems, we add the separator to both sides. This means the largest bag can have up to $5k$ vertices and hence, the approximation ratio is $5$.

\section{Running Time of Our Algorithm}
Let $G$, $k$, and $t$ be the initial graph, the bound on the size of the separator, and the number of representatives, respectively. $t$ is at most $C_{\epsilon}k$ since the cut-off threshold for the volume of the subtrees was $\frac{n}{C_{\epsilon}k}$. While proceeding with the algorithm at each step, let $G'$, $k'$, and $t'$ be the current graph, the current bound on the size (capacity) of the separator, and a bound on the number of representatives still to find, respectively. Each time we send some vertex to the separator, we decrease the capacity by one. We assume a worst case, where the number of $\mathcal{R}$\_trees is $t$, and the separator found has size $k$ (other cases only speed up the running time). Then, the recurrence for the running time to find a separator of size at most $k$ in $G$ is:
\begin{equation}\label{eq:rec}
T(t, k) \leq T(\left\lceil t\cdot\frac{k-1}{k}\right\rceil, k-1) + 2T(t-1, k) + Qkn + \mathcal{O}(1), \text{ for } t,k >0, 
\end{equation}  
where $Q$ is the constant factor of the DFS. On the R.H.S., the first term handles the case where the representative goes to the separator. Therefore, $k$ decreases by 1, and the number of subtrees becomes at most $t\cdot\frac{k-1}{k}$ (delete that vertex and continue the DFS). The second term handles the case that the current representative does not go into the separator but left or right. Here, the capacity of the separator is unchanged but the number of subtrees decreases by 1. The third term is the upper bound of the exact running time of the DFS. And the last term $\mathcal{O}(1)$ is the overhead to make the calls. The base cases of the recurrence:
\begin{equation}\label{eq:base}
\nonumber
\begin{tabular}{ll}
$T(0, k)  \leq Q(k+1)kn$ & $\text{to do } k \! + \! 1 \text{ DFSs. We find } k \text{ paths from $X$ to $Y$,}$ \\
	& and search for augmenting paths. \\
$T(t, 0)  \leq Qkn$ & $\text{We have to test whether } S \text{ is indeed a separator}.$ 
\end{tabular}
\end{equation}
The recurrence in \ref{eq:rec} might seem hopeless, so we simplify it (generous bound).
\begin{equation}\label{eq:time}
\tcboxmath{T(t, k) \leq T(t, k-1) + 2T(t-1, k) + Qkn + \mathcal{O}(1),\,\, t,k>0}
\end{equation}
Now, we have to solve this recurrence. Our recursion tree starts from the root $T(t,k)$ and has two children $T(t, k-1)$ and $T(t-1, k)$, left and right, respectively. This is an unbalanced binary tree. Each strand terminates when one of the arguments of $T(\cdot\, ,\cdot)$ becomes zero. Each time we choose the left branch (putting one representative into the separator), we decrease $k$ by 1. Otherwise (putting the representative and its subtree to the right or left set of the separator), we decrease $t$ by 1 and multiply the value by $2$. Let $\#(t-i,0)$ be the (worst case) number of leaves with the first argument $t -i$ and the second argument $k = 0$, for $0\leq i < t$ (analogous notation for $\#(0,k-j)$ for $0 \leq j < k$). Observe that $\#(t-i,0) = \binom{k+i}{i}$, and $\#(0,k-j) = \binom{t+j}{j}$. The first two terms of Equation \ref{eq:time} can be computed at the leaves and the other two terms are spent in every vertex of the recursion tree. The reason is that for the first two terms, we need the results of the children and recursively everything relies on the results of the leaves.  Each time we decide to consider a subproblem (any internal node corresponds to a subproblem), we have to find the actual separator. This step takes $Qkn$ time. Also, in order to make the recursive calls, we spend $\mathcal{O}(1)$ time.

Now, we compute the first part (the first two terms).
\begin{align*}
&\sum_{i = 0}^{t-1}\left(\#(t-i,0)\hspace{-1.5em}\underbrace{2^i}_\text{i right branches}\hspace{-1.5em}T(t-i, 0)\right)
+ \sum_{j = 0}^{k-1}\left(\#(0,k-j)\hspace{-1.5em}\underbrace{2^{t}}_\text{$t$ right branches}\hspace{-1.5em}T(0, k-j)\right)
\\&\leq  \sum_{i = 0}^{t}\left(\binom{k+i}{i}2^i\underbrace{Qk^2n}_\text{Eq. \ref{eq:base}}\right) 
+ \sum_{j = 0}^{k-1}\left(\binom{t+j}{j}2^{t}\underbrace{Qkn}_\text{Eq. \ref{eq:base}}\right)
\\&\leq  Qk^2n\sum_{i = 0}^{t}\binom{k+i}{k}2^i+Qkn2^{t}\sum_{j = 0}^{k-1}\binom{t+j}{j}
\\&< Qk^2n\sum_{i = 0}^{t}\binom{k+t}{k}2^i+Qkn2^{t}\binom{k+t}{k-1}
\\&\leq Qk^2n\binom{k+t}{k}\left(2^{t+1}-1\right)+Qk^2n2^{t}\binom{k+t}{k} 
\end{align*}
Now, we have to compute the second part of Equation \ref{eq:time} where we should look at every internal vertex of the tree. We have $\binom{t+k}{k}-1$ internal vertices (in worst-case), and in each vertex with value $k'$, we spend at most $Qk'kn + \mathcal{O}(1) \leq Qk^2n(1+\mathcal{O}(\frac{1}{n}))$. Hence, 
\begin{equation}
\begin{split}\label{eq:ruTi}
T(t, k) &\leq \left(2^{t+1}-1\right)Qk^2n\binom{k+t}{k} + Qk^2n(1+\mathcal{O}(\frac{1}{n}))\binom{t+k}{k}
\\& = Qk^2 n \left(2^{t+1}+\mathcal{O}\left(\frac{1}{n}\right)\right)\binom{k+t}{k}
\end{split}
\end{equation}
Notice that $T(\cdot\, , k)$ is monotonic (due to the definition of $T$) and use the fact that $t \leq C_{\epsilon}k$.
Now, we simplify Equation \ref{eq:ruTi} by bounding $T(t, k)$ with $T(C_{\epsilon}k, k)$ and using Striling's approximation.
\begin{align*}
    T\left(t, k\right)&\leq T\left(C_{\epsilon}k, k\right)\leq Qk^2n\left(2^{C_{\epsilon}k+1} + \mathcal{O}\left(\frac{1}{n}\right)\right)\binom{\left(C_{\epsilon}+1\right)k}{k}
    \\&\leq Qk^{\frac{3}{2}}n\left(2^{C_{\epsilon}k+1} + \mathcal{O}\left(\frac{1}{n}\right)\right) \sqrt{\frac{C_{\epsilon}+1}{2\pi C_{\epsilon}}}\left(1+\frac{1}{C_{\epsilon}}\right)^{C_{\epsilon}k}\left(C_{\epsilon}+1\right)^k
\end{align*}

Now, we compute the running time ($T_V$), when we split based on $V$. We assume a worst case split of $\epsilon$ to $1-\epsilon$ and worst case separator of size $k$.
\begin{align*}\label{eq:final_time}
&T_{V}(n, k) = T_{V}(\epsilon n+k, k) + T_{V}((1-\epsilon)n+k, k) + T(t,k)
\\&\leq T_{V}(\epsilon n+k, k) + T_{V}((1-\epsilon)n+k, k)  
\\&\quad + Qk^{\frac{3}{2}}n\left(2^{C_{\epsilon}k+1} + \mathcal{O}\left(\frac{1}{n}\right)\right) \sqrt{\frac{C_{\epsilon}+1}{2\pi C_{\epsilon}}}\left(1+\frac{1}{C_{\epsilon}}\right)^{C_{\epsilon}k}\left(C_{\epsilon}+1\right)^k
\\& \leq Qk^{\frac{3}{2}}\left(2^{C_{\epsilon}k+1} + \mathcal{O}\left(\frac{1}{n}\right)\right) \sqrt{\frac{C_{\epsilon}+1}{2\pi C_{\epsilon}}}\left(1+\frac{1}{C_{\epsilon}}\right)^{C_{\epsilon}k}\left(C_{\epsilon}+1\right)^k\frac{1}{\epsilon}n\ln n
\end{align*}

In the above equation, we use the corollary to the following lemma (Lemma~ \ref{lem:runtim}). Note that the reason we add $k$ to both recursive calls is that we add the separator to both subproblems.
\begin{lemma}\label{lem:runtim}
Assume $0 < \epsilon \leq \frac{1}{2}$, $0<c' \leq c$, $2 \leq k$,  and  $n_1 + n_2 =n$. Then the recurrence
\[ f(n+k)  \leq 
\begin{cases}
 c' (n+k) & \mbox{if $n-k \leq 4k$} \\
 f(n_1 + k) + f(n_2 + k) + c (n+k)  & \mbox{otherwise,} 
\end{cases}
\]
where $\frac{1}{2} n \leq n_1 \leq (1 - \epsilon)n$ has a solution with $ f(n+k) \leq \frac{c}{\epsilon}  n \ln n  - ck $, for $n \geq 2k$.
\end{lemma}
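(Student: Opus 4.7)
The plan is to prove the statement by strong induction on $n$, establishing the bound $f(n+k) \leq \tfrac{c}{\epsilon}\, n \ln n - ck$ for every real $n \geq 2k$.

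\emph{Base case} ($2k \leq n \leq 4k$): The recurrence yields $f(n+k) \leq c'(n+k) \leq c(n+k)$, so it suffices to verify $\epsilon(n+2k) \leq n\ln n$. From $n \geq 2k$ one has $n+2k \leq 2n$, and $\epsilon \leq \tfrac{1}{2}$ then gives $\epsilon(n+2k) \leq n$. Since $n \geq 2k \geq 4 > e$ forces $\ln n \geq 1$, we get $n \leq n\ln n$, closing the case.

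\emph{Inductive step} ($n > 4k$): The recurrence gives $f(n+k) \leq f(n_1+k) + f(n_2+k) + c(n+k)$ with $n/2 \leq n_1 \leq (1-\epsilon)n$ and $n_2 = n-n_1 \in [\epsilon n,\, n/2]$. Since $n_1 \geq n/2 > 2k$, the induction hypothesis applies to $n_1$, yielding $f(n_1+k) \leq \tfrac{c}{\epsilon}\, n_1 \ln n_1 - ck$. I would then split into two sub-cases according to $n_2$. If $n_2 \geq 2k$, the IH also bounds $f(n_2+k) \leq \tfrac{c}{\epsilon}\, n_2 \ln n_2 - ck$. Substituting into the recurrence and using the identity $n \ln n - n_1 \ln n_1 - n_2 \ln n_2 = n\,H(n_1/n)$, where $H(p) = -p\ln p - (1-p)\ln(1-p)$ is the natural-log binary entropy, the target inequality reduces to $H(n_1/n) \geq \epsilon$. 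Since $H$ is symmetric about $\tfrac{1}{2}$ and decreasing on $[\tfrac{1}{2},1]$, its minimum on $[\tfrac{1}{2}, 1-\epsilon]$ is $H(1-\epsilon)$, so this reduces further to the entropy inequality $H(1-\epsilon) \geq \epsilon$ for all $\epsilon \in (0, \tfrac{1}{2}]$.

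If instead $n_2 < 2k$, the IH does not apply to $n_2$; I would invoke the recurrence's base clause directly on $n_2$ (permissible since $n_2 \leq 4k$), obtaining $f(n_2+k) \leq c(n_2+k)$. After substitution the claim simplifies to $\epsilon(n + n_2 + 2k) \leq n\ln n - n_1 \ln n_1$. I would lower-bound the right side by $n_2\ln n$ via $\ln n_1 \leq \ln n$, and upper-bound the left side by $2\epsilon n$ using $n_2 \leq n/2$ and $2k < n/2$ (the latter from $n > 4k$); the remaining inequality $2\epsilon n \leq n_2 \ln n$ follows from $n_2 \geq \epsilon n$ together with $\ln n \geq 2$, and the latter holds because $n > 4k \geq 8 > e^2$.

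\emph{Main obstacle.} The crux of the argument is the entropy inequality $H(1-\epsilon) \geq \epsilon$ on $(0, \tfrac{1}{2}]$. To prove it I would study $\phi(\epsilon) := H(1-\epsilon) - \epsilon$; a direct differentiation gives $\phi'(\epsilon) = \ln\tfrac{1-\epsilon}{\epsilon} - 1$, so $\phi$ is unimodal on $(0, \tfrac{1}{2}]$ with its unique interior critical point at $\epsilon = 1/(1+e)$, which is a maximum. Consequently the minimum of $\phi$ on $(0, \tfrac{1}{2}]$ is attained at an endpoint, and since $\phi(0^+) = 0$ and $\phi(\tfrac{1}{2}) = \ln 2 - \tfrac{1}{2} > 0$, both boundary values are nonnegative, giving $\phi \geq 0$ throughout the interval, which is exactly the required inequality.
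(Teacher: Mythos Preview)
Your proof is correct and follows essentially the same three-case induction as the paper: the same base case $2k\le n\le 4k$, and the same split of the inductive step according to whether $n_2\ge 2k$ or $n_2<2k$. The only cosmetic difference is that in the balanced sub-case you package the computation via the entropy identity $n\ln n - n_1\ln n_1 - n_2\ln n_2 = nH(n_1/n)$ and then verify $H(1-\epsilon)\ge\epsilon$, whereas the paper substitutes the extremal split $n_1=(1-\epsilon)n$ directly and uses the elementary bounds $\ln(1-\epsilon)<-\epsilon$ and $\epsilon+\ln\epsilon\le 0$; these are equivalent calculations.
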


\begin{proof}
	Case 1: $2k \leq n \leq 4k$. Then $n \geq 4$ implying $\ln n > 1$ and
	\[f(n+k) \leq c(n+k)  <  2cn  - ck < \frac{c}{\epsilon}  n \ln n  - ck.\]
	
	\noindent
	Case 2: $n \geq 4k$ and $n_2  \geq 2k$.
	\begin{eqnarray*}
		f(n+k) & \leq & f(n_1 + k) + f(n_2 + k) + c (n+k) \\
		& \leq & \frac{c}{\epsilon} \left( n_1 \ln n_1 + (n - n_1) \ln (n - n_1) \right) - 2ck + c (n+k) \\
		& < & \frac{c}{\epsilon} \left((1 - \epsilon) n (\underbrace{\ln (1-\epsilon)}_{< - \epsilon} + \ln n) + \epsilon n (\ln \epsilon + \ln n)\right)  + c (n-k) \\
		&  < & \frac{c}{\epsilon} n \ln n - c (1-\epsilon) n + c n \ln \epsilon + c (n-k) \\
		& \leq & \frac{c}{\epsilon} n \ln n + (\underbrace{\epsilon \; \; \; + \; \; \: \ln \epsilon}_{< 0 \mbox{ for $\epsilon \leq \frac{1}{2}$}}) \, cn - ck \\
		& \leq & \frac{c}{\epsilon} n \ln n - ck 
	\end{eqnarray*}
	
	\noindent
	Case 3:  $n \geq 4k$ and $n_2 = n - n_1 < 2k$.
	\begin{eqnarray*}
		f(n+k) & \leq &  f(n_1 + k) + f(n_2 + k) + c (n+k) \\
		& \leq & \frac{c}{\epsilon} n_1 \ln n_1 - ck + c'(n_2 + k)  + c (n+k) \\
		& \leq & \frac{c}{\epsilon} (1 - \epsilon) n (\ln (1-\epsilon) + \ln n) - ck + c'(n_2 + k)  + c (n+k) \\
		& < & \frac{c}{\epsilon} (1 - \epsilon) n (-\epsilon + \ln n) - ck + c'(n_2 + k)  + c (n+k) \\
		& \leq & \frac{c}{\epsilon} n \ln n - ck - cn \ln n - c (1 - \epsilon) n + c'(n_2 + k)  + c (n+k) \\
		& < & \frac{c}{\epsilon} n \ln n - ck 
	\end{eqnarray*}
	The last inequality is true, because $n \geq 4k \geq 8$ implying $\ln n >2$, and $n+ k < 3k\leq \frac{3}{4}n$.
\end{proof}

\begin{corollary}
Under the conditions of the Lemma \ref{lem:runtim},
\[  f(n) \leq \frac{c}{\epsilon} n \ln n .\]
\end{corollary} 
Now, the total running time of the algorithm ($T_t$) is:
\begin{align}\label{eq:totalruntimeeditted}
T_t(n,k) \leq Qk^{\frac{3}{2}}\left(2^{C_{\epsilon}k+1} + \mathcal{O}\left(\frac{1}{n}\right)\right) \sqrt{\frac{C_{\epsilon}+1}{2\pi C_{\epsilon}}}\left(1+\frac{1}{C_{\epsilon}}\right)^{C_{\epsilon}k}\left(C_{\epsilon}+1\right)^k\frac{\left(1+\log_{\frac{3}{2}} k\right)}{\epsilon}n\ln n,
\end{align}
where it takes $\log_{\frac{3}{2}}k$ steps so that the $k_{excess}$ drops to zero (that is when we need to split by $V$ once more).\\
As we mentioned in Section \ref{sec:divide}, $C_{\epsilon} = \frac{4}{1 - 2 \epsilon}$. We plug that into Equation \ref{eq:totalruntimeeditted}.
\begin{equation}\label{eq:lat_time}
T_t(n, k) \leq Qk^{\frac{3}{2}}\left(2^{\frac{4k}{1-2\epsilon}+1} + \mathcal{O}\left(\frac{1}{n}\right)\right) \sqrt{\frac{5 - 2\epsilon}{8\pi}}\left(1.25 -  \frac{\epsilon}{2}\right)^{\frac{4k}{1-2\epsilon}}\left(\frac{4}{1-2\epsilon}+1\right)^k\frac{\left(1+\log_{\frac{3}{2}} k\right)}{\epsilon}n\ln n,
\end{equation}

    In Equation~\ref{eq:lat_time}, let $\epsilon \rightarrow 0$.
    \begin{align}
    \lim_{\epsilon\rightarrow 0} T_t(n, k) &\leq\frac{1}{\epsilon} \sqrt{\frac{5}{2\pi}}Qk^{\frac{3}{2}}\left(\log_{\frac{3}{2}} k\right) 2^{\left(4 + 4\log 1.25 + \log 5\right)k} n \ln n
    \end{align}
\end{proof}

For instance, by setting $\epsilon = 10^{-2}$,we have 
\[T_t(n, k) \leq 90Q k^{1.5}\log_{3/2} k \, 2^{7.718k} n \ln n,\]
and by setting $\epsilon = 10^{-3}$, we get
\[
T_t(n, k) \leq 893Q k^{1.5}\log_{3/2} k \, 2^{7.621k} n \ln n
\]
\begin{corollary}
There exists a 5-approximation algorithm for treewidth that runs in time $\mathcal{O}\left(2^{7.61k}n\log n\right).$
\end{corollary}
 \section{Conclusion}
In this paper, we have given a detailed analysis of Reed's  treewidth approximation algorithm\cite{reed1992finding}. We have shown that it runs in time $\mathcal{O}(2^{24k}k!n\log n)$, where $k$ is treewidth+1. Furthermore, we have shown that it is a 5 or 7 approximation, depending on how frequently we split based on $V$.

Then, we have given our improved algorithm which runs in time $\mathcal{O}(2^{7.61k} n \log n)$, with the same approximation ratio.
 
 Our main goal was to obtain a small coefficient of $k$ in the exponent to make the algorithm more applicable. We think it is still possible to further improve it. Also, trying to come up with a better approximation ratio is worthwhile while maintaining the same dependence on $k$ (a 2-approximation \cite{korhonen2021single} is known but the coefficient of $k$ in the exponent is larger). 
 
 Finally, another direction for future work could be giving hardness results on the approximation ratio.

\bibliographystyle{abbrvurl}







\end{document}